\crefname{enumi}{step}{steps}
\crefname{property}{property}{properties}
\tiny\color{black},
\newcounter{gnote_counter}
\newcounter{enote_counter}
\newcommand{\codestyle}[1]{{\ttfamily #1}}
\newcommand{\term}[1]{{\em #1}}
\newcommand{\illus}[1]{#1}
\newcommand{\ins}{\codestyle{insert}}
\newcommand{\del}{\codestyle{delete}}
\newcommand{\contains}{\codestyle{contains}}
\newcommand{\size}{\codestyle{size}}
\newcommand{\countt}{\codestyle{count}}
\newcommand{\summ}{\codestyle{sum}}
\newcommand{\minn}{\codestyle{min}}
\newcommand{\maxx}{\codestyle{max}}
\newcommand{\select}{\codestyle{select}}
\newcommand{\rank}{\codestyle{rank}}
\newcommand{\cas}{\texttt{CAS}}
\newcommand{\fastquer}{\texttt{FastQueryTree}}
\newcommand{\fastup}{\texttt{FastUpdateTree}}
\DeclareMathOperator*{\Oplus}{\oplus}
\newtheorem{property}[theorem]{Property}
\title{Concurrent aggregate queries}
\author{Gal Sela}{Technion, Israel}{galy@cs.technion.ac.il}{0000-0003-2342-6955}{}
\author{Erez Petrank}{Technion, Israel}{erez@cs.technion.ac.il}{0000-0002-6353-956X}{}
\authorrunning{G.\, Sela and E.\, Petrank}
\keywords{Concurrent Algorithms; Concurrent Data Structures; Aggregate queries; Range queries; Binary Search Tree; Linearizability}
\begin{document}
\newcommand{\enote}[1]{{\color{blue}\begin{quote}{\bf Erez's note:} #1\end{quote}}}
\newcommand{\gnote}[1]{{\color{teal}\begin{quote}{\bf Galy's note:} #1\end{quote}}}
\newcommand{\ignore}[1]{}

\maketitle

\begin{abstract}
Concurrent data structures serve as fundamental building blocks for concurrent computing. Many concurrent counterparts have been designed for basic sequential mechanisms; however, one notable omission is a concurrent tree that supports aggregate queries. Aggregate queries essentially compile succinct information about a range of data items, for example, calculating the average salary of employees in their 30s. Such queries play an essential role in various applications and are commonly taught in undergraduate data structures courses.
In this paper, we formalize a type of aggregate queries that can be efficiently supported by concurrent trees and present a  design for implementing these queries on concurrent trees. We bring two algorithms implementing this design, where one optimizes for tree update time, while the other optimizes for aggregate query time. We analyze their correctness and complexity, demonstrating the trade-offs between query time and update time.

\end{abstract}

\section{Introduction}\label{section:intro}
Concurrent programs rely on concurrent data structures as a foundational component.
Considerable effort has been dedicated to constructing efficient concurrent data structures that allow using data structures in a concurrent setting. However, not all sequential functionalities have been extended to the concurrent setting. In this paper we look at such a functionality whose concurrent version has not been addressed: efficient aggregate queries.
An aggregate query is a query whose answer aggregates into a succinct value information about a range of elements with consecutive keys in the data structure. For instance, a data structure holding employee records sorted by age may be queried regarding the average salary of employees in a certain age range. 
Efficient aggregate queries in this spirit were not designed for concurrent data structures.

It is important to build efficient concurrent algorithms for aggregate queries, as sequential aggregate queries are used in various applications, and a concurrent extension may scale their execution on a multi-core machine.
For instance, order-statistic trees \cite{cormen2022introduction}, which support the \select{}$(i)$ and \rank{}$(key)$ aggregate queries (returning the element with the $i$-th smallest key, and the position of $key$, respectively), are used in Python libraries for sorted containers \cite{pythonBlist,pythonSortedcontainers} to efficiently support the basic operations of accessing $collection[i]$ and querying $collection.index(key)$ respectively.
In C++, Boost library for example offers an order-statistic tree
\cite{boostRankedIndices}. 
Furthermore, various text editors use augmented tree structures similar to order-statistic trees to efficiently access and manipulate text at a certain location indicated by the editor's cursor. One example is the \term{rope} data structure \cite{boehm1995ropes}, which has various implementations (e.g., \cite{ropey,cropRope,ropeSGI}), and underlies the buffer implementation of some text editors like Xi \cite{xiEditor} and Lapce \cite{lapce}.

Naively, one could answer an aggregate query on a sequential data structure by traversing the relevant elements. The concurrent counterpart would be taking a linearizable snapshot of the data structure and traversing it. Previous works on range queries accomplished that \cite{wei2021constant,petrank2013lock,nelson2022bundling,arbel2018harnessing}, but the traversal in this approach costs time linear in the number of elements in the queried range, which is highly inefficient for aggregate queries that may be answered using some metadata without traversing all the relevant elements.
There has been work on implementing specific concurrent aggregate queries more efficiently: \cite{sela2021concurrentSize} proposed a way to efficiently support a \size{} query returning the total number of elements in a concurrent data structure. They hold appropriate central metadata regarding the data structure's size. This metadata is updated once on behalf of each effectual operation (we call an operation that modifies the data structure, like a successful \ins{} or \del{}, an \term{effectual operation}). \size{} queries take a snapshot of the metadata without accessing the data structure's elements themselves. 
However, their mechanism is not extensible to our case, where we want to efficiently answer a query about a specific range, given as an input to the query. To this end, each effectual operation will update multiple metadata pieces across multiple nodes, and it should carry out all the updates (of its target element in the data structure and the multiple metadata values) seemingly-atomically so that queries will get a coherent view of the data structure's state.

More specifically, we focus our attention on aggregate queries on trees. We will look at external binary search trees (where external means they hold the elements in the leaves) though our work could be extended to other trees as well. For efficiently answering aggregate queries on sequential trees, one could place in each tree node suitable metadata that is a function of the elements in the leaves of the node's subtree. For instance, an order-statistic tree augments a binary search tree with a size field expressing the number of elements in the node's subtree. The metadata function should be chosen to be one that effectual operations could maintain during their root-to-target-leaf traversal for not harming their asymptotic time complexity, and also one that aggregate queries could use to get an answer via root-to-leaf traversals (instead of naively traversing the relevant elements), thus executing in time linear in the traversed path length instead of at the number of elements in the query's range.
We formalize the type of addressed aggregate functions and queries in \Cref{section: aggregate}.

Extending such augmented trees to support concurrency is not simple: while each effectual operation affects multiple locations (its target-leaf area and metadata fields in the nodes along its root-to-leaf path), a query should somehow obtain a consistent view of the nodes it traverses, including their metadata fields.
If insertions and deletions simply update the relevant metadata fields one by one, a query might obtain an inconsistent view of the metadata, in which ongoing effectual operations might be reflected only in part of the obtained fields. The query should obtain a linearizable snapshot of the query path including its metadata, where each effectual operation is either fully absent or fully reflected in all relevant snapshotted fields.

To get a snapshot while viewing the multi-point updates per insertion or deletion as atomic, one could naively use a lock, or employ transactions---for atomically updating the target-leaf area and the metadata along the root-to-leaf path by each successful insertion or deletion, as well as for reading the root-to-leaf path by queries. But that would be inefficient since all effectual operations as well as concurrent queries would be serialized one after another because they all synchronize on the metadata in the root node.
We wish to take an alternative more fine-grained approach, where we let queries traverse the tree without taking a lock, but make sure to fill in the missing updates in the otherwise potentially partial picture a query obtains.
We employ two mechanisms to achieve this goal: multi-versioning, and announcements of ongoing effectual operations. The first enables queries to ignore effectual operations that are considered to occur after them, and the second enables them to take into account all effects of  effectual operations that are considered to occur before them.

In more detail, each effectual operation and each aggregate query get a timestamp. An aggregate query with timestamp $ts$ is linearized (i.e., considered to occur) after all effectual operations with timestamp $\leq ts$ and before all effectual operations with timestamp $> ts$. 
By multi-versioning we mean that effectual operations which update values in a timestamp later than an ongoing query's timestamp are responsible to leave old versions of the updated objects. These old versions enable queries to obtain a snapshot of the relevant parts of the data structure (including metadata values) without taking into account new updates that are considered to happen after them:
each query grabs a timestamp and then builds its view of the query path by reading object versions tagged with this timestamp (to be precise, with the biggest timestamp that is $\leq$ this timestamp). 
Indeed, such a snapshot will not reflect updates with a timestamp greater than the query's timestamp; however, the snapshot might contain partial updates for some still-ongoing insertions or deletions considered to precede the query according to their timestamp.
Therefore, insertions and deletions globally announce themselves, and queries read these announcements to fill in missing details about them by themselves, and form the desired full view of their traversed path. 

Different approaches could be taken toward the operations announcement and the aggregate metadata representation, optimizing for the time complexity of either effectual operations or aggregate queries. We bring two algorithms implementing our design: \fastup{} optimizing for tree update time, in fact incurring zero additional asymptotic time on the original tree operations, and \fastquer{} optimizing for aggregate query time, reducing its time to be linear in its traversal length times a factor dependent on the number of concurrent operations (in comparison to time linear in the number of elements in the queried range in the naive implementation).
Applications in which it is important that the original tree operations run fast should use \fastup{}, while applications that require fast aggregate queries should use \fastquer{}.

To reduce the contention on effectual operations incurred by our design, \fastup{} lets them work mostly on single-writer fields written only by the thread that performs the operation. 
Both the object in which threads announce their effectual operations, and the aggregate metadata field in each tree node, are arrays with a single-writer cell per thread. This demonstrates a time-space trade-off: effectual operations can execute faster by paying in more space. The trade-off between the original tree operations' time and aggregate query time is also apparent here, as aggregate queries will have to invest more work to calculate aggregate values based on the per-thread aggregate metadata array. 

\fastup{} demonstrates that effectual operations do not have to serialize (order themselves one in respect to another) and they in fact do not need to be aware of other effectual operations at all, which is interesting to note as it might initially seem like they must be ordered for the metadata to be updated of relevant effectual operations by their order.
On the other hand, in \fastquer{}, for aggregate queries to run faster, they are not responsible for combining per-thread metadata into a unified aggregate value, and instead effectual operations serialize in order to know which effectual operations precede them and help update metadata on behalf them. The aggregate metadata is made of a single field (and not an array), updated on behalf of relevant effectual operations by their order.
To serialize themselves, effectual operations announce themselves by enqueueing their announcement to a global queue.

As parallelization of aggregate queries is notably absent in the literature, it is perhaps unsurprising that this challenge has attracted attention from other researchers. In an independent and very interesting complementary research, the problem of concurrent aggregate queries is also being explored~\cite{kokorin2023wait}, where a different approach is presented.
In~\cite{kokorin2023wait} all operations are first serialized on a queue of operations at the root. Each operation then helps all preceding operations in the queue to advance to the appropriate child before proceeding to the next node in its own traversal. Queues are used in each node of the tree to hold all operations that still need to be executed on that node's subtree. Reminiscent of hand-over-hand synchronization, this approach provides a virtual snapshot for the queries, as each operation views the tree with all previous operations have already been executed and none of the subsequent operations have. 
Their alternative approach complements ours, providing a more comprehensive view of possible efficient solutions for aggregate queries. Their solution optimizes aggregate query performance at the expense of the other operations, while our approach also includes the \fastup{} solution, which offers optimal complexity for the original tree operations. 

\smallskip

\noindent {\bf Contribution.} We present a formalization of aggregate queries that may be efficiently supported on concurrent trees, and their related aggregate metadata functions.
We then present a design for supporting such queries on concurrent trees, with two different implementations, presenting a trade-off between the time complexity of the original tree operations and the newly-added aggregate queries.
\smallskip 

\noindent{\bf Organization.} The formalization of aggregate queries appears in \Cref{section: aggregate}. Some terminology required to discuss our algorithms is brought in \Cref{section: terminology}, followed by our design presentation in \Cref{section: algorithms}
and the analysis of the algorithms in \Cref{section: analysis}. Related work is covered in \Cref{section: related work}. We conclude with a discussion of future directions in \Cref{section: discussion}.

\section{Aggregate metadata and aggregate queries}\label{section: aggregate}

We look at aggregate queries on binary trees, using metadata placed in each node aggregating information about its subtree. The basic idea is to use this metadata to answer queries efficiently without traversing all the elements in the query's range, while making sure not to substantially harm the asymptotic time complexity of insertions and deletions, which should be able to maintain the metadata throughout their root-to-leaf traversal, as the only affected metadata should lie along the path to their target key.

\subsection{Aggregate functions used for metadata}

The aggregate metadata we will add to tree nodes is the value of an aggregate function $f$ applied to the set of $(key,value)$ elements in the leaves of the node's subtree. An aggregate function is a function $f:\mathcal{P}(A)\setminus\{\phi\} \to B$, where $A,B$ are non-empty sets.
Our aggregate functions' domain would be all the non-empty subsets of the set of possible $(key,value)$ elements in the tree's leaves (denoted by $A$).
We note that aggregate functions are usually referred to as applied to multisets, since database rows may include repetitions. However, since we treat the domain as pairs of $(key,value)$ and the tree's keys as unique, we define aggregate functions to operate on sets. 
The codomain of aggregate functions, denoted here by $B$, could be for example $\mathbb{R}$, $\mathbb{Z}_2$, the set of possible tree keys, the set of possible tree values, etc., or a product of several such sets (so that $B$'s elements are tuples).
Next we bring a definition that will give us the first useful property we need of aggregate functions:

\begin{definition}[additive aggregate function]\label{definition: additive aggregate function}
We say that an aggregate function $f:\mathcal{P}(A)\setminus\{\phi\} \to B$ is \term{additive} if there exists a binary operation $\oplus:B\times B\to B$ such that for every disjoint $X_1, X_2 \in A$,
\begin{equation}\label{equation: additive aggregate function}
f(X_1\uplus X_2)=f(X_1)\oplus f(X_2)
\end{equation}
\end{definition}

To better understand how an additive aggregate function generally looks, we bring the following equivalent definition:

\begin{definition}[additive aggregate function -- alternative]\label{alternative definition: additive aggregate function}
An aggregate function $f:\mathcal{P}(A)\setminus\{\phi\} \to B$ is \term{additive} if there exists a binary operation $\oplus:B\times B\to B$ such that $(B,\oplus)$ is a commutative semigroup (namely, $\oplus$ is associative and commutative) and $f$ satisfies: 
\[
f(X)=\Oplus_{a\in X}f(\{a\})
\]
(where $\oplus$, though binary, may be extended to be applied to any number of elements from $A$ thanks to the operation's associativity).
\end{definition}

In fact, we could build $f$ on top of any ``base'' function $F$: for any commutative semigroup $(B,\oplus)$ and any $F:A\to B$, $f(X)\coloneq \Oplus_{a\in X}F(a)$ is an additive aggregate function.

Examples of useful additive aggregate functions include \size{} (sometimes denoted by \countt{}), for which $B\coloneq\mathbb{Z}$ and $\oplus \coloneq +$ (simple addition), and \summ{} over the keys or over the values with $\oplus$ taken to be addition over the appropriate domain, e.g. $\mathbb{R}$ if the keys or values are taken from $\mathbb{R}$. The sum of squared values $f(X)=\sum_{(key,value)\in X}value^2$ may be useful for sample variance calculation. Product is another associative commutative operation that could be taken as $\oplus$ to produce the multiplication of the keys or values. 

Requiring the metadata in tree nodes to be a value of an additive aggregate function over the set of $(key,value)$ elements in the leaves of the node's subtree ensures that the metadata in each node may be directly updated upon an insertion to its subtree:

\begin{property}\label{property: update for additive}
Upon an insertion of a $(key,value)$ element into a certain subtree, the new updated value of the metadata in the subtree's root may be calculated by $old\oplus f(\{(key,value)\})$ where $old$ is the old metadata value.
\end{property}

\minn{} and \maxx{} are also additive aggregate functions, where $\oplus$ is taken to be \minn{} or \maxx{} respectively, but our work will not support using them as metadata in nodes since they do not satisfy the following property:

\begin{definition}[subtractive aggregate function]\label{definition: subtractive aggregate function}
We say that an aggregate function $f:\mathcal{P}(A)\setminus\{\phi\} \to B$ is \term{subtractive} if it is additive, and there exists a subtractive binary operation $\ominus:B\times B\to B$ such that for every disjoint $X_1, X_2 \in A$,
\begin{equation}\label{equation: subtractive aggregate function}
f(X_2)=f(X_1\uplus X_2)\ominus f(X_1)    
\end{equation}
\end{definition}

The following definition is equivalent to \Cref{definition: subtractive aggregate function}:

\begin{definition}[subtractive aggregate function -- alternative]\label{alternative definition: subtractive aggregate function}
An aggregate function $f:\mathcal{P}(A)\setminus\{\phi\} \to B$ is \term{subtractive} if it is additive with an operation $\oplus:B\times B\to B$, and $(B,\oplus)$ is a group (namely, $\oplus$ has an identity element and every element of $B$ has an inverse element).
\end{definition}

This alternative definition helps to materialize the $\ominus$ operator from \Cref{definition: subtractive aggregate function}: as each $b_2\in B$ has an inverse element $-b_2$, we define $\ominus$ as follows (and this satisfies the requirement of \Cref{equation: subtractive aggregate function}):
\[
b_1\ominus b_2\coloneq b_1\oplus -b_2
\]

All additive aggregate function examples brought above, except for \minn{} and \maxx{}, are subtractive. For all of them $\ominus$ should be subtraction, other than for product for which it should be division.

The metadata field we will add to the tree nodes will hold the value of an aggregate function which is not only additive but rather also subtractive. 
This way, the following property will hold and enable to update the metadata accordingly:

\begin{property}\label{property: update for subtractive}
Upon a deletion of a $(key,value)$ element from a certain subtree, the new updated value of the metadata in the subtree's root may be calculated by $old\ominus f(\{(key,value)\})$ where $old$ is the old metadata value.
\end{property}

On deletion we conceptually invert the node's metadata value to its state without the deleted item, which we could not have done for non-subtractive aggregate functions like \minn{}.
Being able to directly update the metadata in a certain node to reflect a deletion in its subtree, without re-calculating the metadata node by node from the location of the deleted leaf upwards, is especially important for our algorithms, where the deletion initiator is not the only who needs to calculate its effect on the metadata in the deleted leaf's ancestors. Other operations whose root-to-leaf path intersects the deletion's path might need to do so as well in ancestors mutual with this deletion, and they should not traverse all the way from the deleted leaf to the relevant ancestor which might be costly. 

\subsection{Aggregate queries}

An aggregate query on a data structure returns a result based on multiple data elements of the data structure. We look specifically at trees, whose augmentation with appropriate metadata in all nodes may enable aggregate queries to execute efficiently through root-to-leaf traversals.
Metadata obtained during the traversals may be used both to navigating through the tree, choosing the appropriate path to traverse, and for calculating the query's result.
Some queries require multiple root-to-leaf traversals for computing their answer. These traversals may be independent of each other, which means they could be executed concurrently, followed by a central calculation of the query's answer using their results. But there are also queries  that require a serial execution of traversals, which is the case when each traversal depends on the result of the previous traversal.
Accordingly, we next define a \term{simple} aggregate query, which executes only independent traversals; a general aggregate query is a chain of one or more simple aggregate queries composed one on another: the user's input is the input of the first simple query in the chain, the output of the $i$-th simple query in the chain is the input of the $(i+1)$-st query, and the output of the last simple query is the output of the whole query.
\Cref{section: aggregate examples} brings examples of supported aggregate queries, and gives intuition to what kind of queries require each part of the construction: a basic traversal, multiple traversals, and chained traversals.

A simple aggregate query performs one or more independent root-to-leaf traversals to gather the information required to answer the query, and then computes the answer using the traversals' results. The traversals may be executed concurrently as they are independent of each other. 
All traversals have a fixed structure that appears in the template in \Cref{fig: basicAggregateQuery template} detailed below; the only difference between them is in the \codestyle{shouldDescendRight} method called for each node to determine to which child the traversal should proceed, and hence the leaf they eventually arrive at. A traversal outputs a tuple (\textit{aggValue, leaf}), where \textit{aggValue} is the value of the metadata subtractive aggregate function on the set of $(key,value)$ pairs of all leaves in the key range $(-\infty, k)$ with $k$ being the key in the leaf ending the traversal, and \textit{leaf} is this leaf's object. 
A simple aggregate query takes an input (e.g., a key or an index) and needs to return the required output, namely, the query's answer. Its definition is made of two components: as many \codestyle{shouldDescendRight} methods as the traversals it needs (each of them may integrally use the query's input), and a \codestyle{computeAnswer} method that takes the list of the traversals' outputs and computes the query's answer.

\begin{figure}
\begin{lstlisting}
@\underline{aggregateTraversalTemplate<shouldDescendRight>()}@:
@\textit{init:}@
    node = tree.root
    skippedNodesAggValue = identityElement$_{(B,\oplus_f)}$
@\textit{traverse:}@
    while node is not a leaf:
        aggValueUpToCurrentKey = skippedNodesAggValue $\oplus_f$ node.left.aggValue@\label{line:calc agg}@
        if shouldDescendRight(aggValueUpToCurrentKey, node.key):@\label{line:shouldDescendRight}@
            node = node.right
            skippedNodesAggValue = aggValueUpToCurrentKey@\label{line:update skippedNodesAggValue}@
        else:
            node = node.left
return (skippedNodesAggValue, node)
\end{lstlisting}
\caption{Template for a basic aggregate query}\label{fig: basicAggregateQuery template}
\end{figure}

Next, we present the traversal algorithm ran during aggregate queries execution. 
Throughout the paper, we denote the nodes' metadata type by $B$ and the metadata subtractive aggregate function (whose codomain is $B$) by $f$.
We further denote $f$'s corresponding operators (from \Cref{definition: additive aggregate function,definition: subtractive aggregate function}) by $\oplus_f$ and $\ominus_f$, and the identity element of the group $(B,\oplus_f)$ by \codestyle{identityElement$_{(B,\oplus_f)}$}.
The general traversal template, shown in \Cref{fig: basicAggregateQuery template}, takes as parameter a \codestyle{shouldDescendRight} method (defined by the aggregate query).
It performs a root-to-leaf traversal on the tree. 
At any point in the traversal, \codestyle{skippedNodesAggValue} holds the value of $f$ on the set of $(key,value)$ pairs of all leaves found in subtrees that the traversal has jumped over so far (namely, descended to the right while they were in the left subtree). At the end of the traversal, this will be the set corresponding to all keys preceding the key of the leaf the traversal has reached. 
For each traversed node, the aggregate value \codestyle{skippedNodesAggValue} computed so far is combined using $\oplus_f$ with the aggregate value of the current left subtree, found in the metadata of the left child, to form \codestyle{aggValueUpToCurrentKey}---which represents the value of $f$ on the set of $(key,value)$ pairs of all leaves with $key\leq$ the key of the current node in the traversal (\Cref{line:calc agg}).
The computation of this value is made possible using one simple $\oplus$ operation thanks to using an additive aggregate function on the subtree's leaves as the node's metadata.
Then the query-specific method \codestyle{shouldDescendRight} is called to determine to which child the traversal should proceed (\Cref{line:shouldDescendRight}). It takes \codestyle{aggValueUpToCurrentKey} and the current node's key. 
(For example, a traversal that aggregates $f$ for all keys up to a certain key $k$ should search for $k$, and the query would accordingly define a \codestyle{shouldDescendRight} method that returns true iff $k \geq$ the current node's key---we assume a binary search tree where keys equal to or greater than a node's key are stored in its right subtree. An example to \codestyle{shouldDescendRight} that requires \codestyle{aggValueUpToCurrentKey} to make its decision is a \select{} query, for which $f$ counts the number of elements, and whose traversal proceeds to the right child if the required leaf's index $>$ \codestyle{aggValueUpToCurrentKey}.)
In case of descending to the right, \codestyle{skippedNodesAggValue} is updated to take into account the leaves of the current left subtree (\Cref{line:update skippedNodesAggValue}).
The traversal stops when it reaches a leaf node, and returns \codestyle{skippedNodesAggValue} and the leaf object.

\section{Terminology}\label{section: terminology}

We will look at binary search trees implementing dictionaries.
A \term{dictionary} (synonymously \term{map} or \term{key-value map}) is a collection of distinct keys with associated values, supplying the following interface operations: an \ins($k$, $v$) operation which inserts the key $k$ with the associated value $v$ if the key does not exist or else returns a failure; a \del($k$) operation which deletes $k$ and its value if $k$ exists and returns the value or else returns a failure; and a \contains($k$) operation which returns $k$'s value if $k$ exists else returns NOT\_FOUND.
We call \ins{} and \del{} that return failure \term{failing}, otherwise they are \term{successful}. We call successful \ins{} and \del{} \term{effectual operations}.
We will assume binary search trees where keys smaller than a node's key are stored in its left subtree and keys equal to or greater than a node's key are stored in its right subtree.
We will look specifically at external trees, i.e., their items are found in the leaves.

We assume the basic asynchronous shared memory model \cite{herlihy1991wait}, in which a fixed set of threads communicate through memory access operations.
An execution on a concurrent data structure is considered \term{linearizable} \cite{herlihy1990linearizability} if each method call appears to take effect at once, between its invocation and its response events, at a point in time denoted its \term{linearization point}, in a way that satisfies the sequential specification of the objects.
A concurrent data-structure is \term{linearizable} if all its executions are linearizable.
The two tree algorithms we present, \fastup{} and \fastquer{}, are linearizable.

\section{The design}\label{section: algorithms}

An overview of our design for augmenting a concurrent binary search tree with subtractive aggregate metadata to support aggregate queries appears in \Cref{subsection: design overview}. We describe the two different approaches towards implementing this design in \Cref{subsection: the 2 algs}. We then delve into details---of the tree on which we demonstrate our methodology in \Cref{subsection: base tree}, the common backbone of the two algorithms in \Cref{subsection: design backbone}, the unique details of each algorithm in \Cref{section: fastup,section: fastquer} and optimizations in \Cref{section: optimizations}.

\subsection{Design overview}\label{subsection: design overview}
We wish to extend a basic tree with support for efficient aggregate queries. For that we add to tree nodes aggregate metadata that will enable to answer them efficiently. The aggregate metadata in each tree node equals to the value of a subtractive aggregate function $f$ applied to the set of $(key,value)$ elements in the leaves of the node's subtree.

We need to correctly answer aggregate queries even if they are concurrent with operations that update the tree. The challenge is to overcome the fact that effectual operations carry out multiple modifications of the tree, and let aggregate queries observe a consistent view of the parts they traverse in the data structure, as if each concurrent effectual operation has completely taken place or did not start at all. 
Each effectual operation and each aggregate query obtain a timestamp. Every query should observe all modifications related to effectual operations with timestamps $\leq$ its timestamp, and not see modifications related to effectual operations with a greater timestamp.

For that, on the one hand we need a query to consider all modifications by effectual operations, which run concurrently with the query and have a timestamp $\leq$ its timestamp, even if some of these modifications have not yet occurred. To this end, ongoing effectual operations announce themselves by adding an \textit{Update} object with their details to a global \textit{CurrUpdates} object, to enable concurrent queries to complete the missing details by themselves. Among other details, the \textit{Update} object contains a $timestamp$ field indicating the operation's timestamp, and a $done$ flag indicating whether the operation is done both updating affected aggregate fields and applying itself to the tree. (All the fields are detailed in \Cref{subsection: node and Update fields}.)

On the other hand, we also need to prevent effectual operations, which run concurrently with a query and have a greater timestamp than the query's timestamp, from overriding data the query is about to use with new data. To this end, we employ versioning for modifiable fields in the tree's nodes: effectual operations leave old versions of the data for the queries to inspect, and write the new values in new versions they create for the relevant fields. 
More specifically, we use timestamped version lists for both the child pointers and the added aggregate metadata field in the tree nodes. These versioned fields are made of a linked list of values tagged with descending timestamps. Reading them may be performed with or without an input timestamp, while writing to them must be done with an input timestamp, as detailed next.

A \term{versioned read} takes a timestamp $ts$, and traverses the list until reaching a version with timestamp $\leq ts$, whose value it returns. A \term{standard read} returns the value in the first (most recent) version in the list. A \term{standard timestamped read} returns the value in the first version in the list and its timestamp. A write to a versioned field (unprotected write, which should be performed while it is guaranteed that no other thread concurrently tries to write to the field) takes a value and a timestamp $ts$, and links a new version to the head of the linked list of versions with the new value and timestamp $ts$. A thread-safe write to a versioned field (namely, while other threads might concurrently try to write to it) takes a value and a timestamp \textit{newTs}, in addition to \textit{lastTs}---the timestamp expected to be the most recent one in the list; it obtains the current first version, and if its timestamp is \textit{lastTs} it tries to link before it (to the head of the linked list) a new version with the new value and timestamp \textit{newTs} using a compare-and-swap (\cas{}).
The full pseudocode for versioned fields appears in \Cref{subsection: versioned field}. Which kind of read or write is used in what scenarios will become clear in the following sections.

\subsection{The two algorithms}\label{subsection: the 2 algs}

The two proposed algorithms share the same backbone, but they handle differently the way operations obtain a timestamp and announce themselves in \textit{CurrUpdates} in case of effectual operations, as well as the aggregate metadata representation. 
These could be biased in favor of the time complexity of either effectual operations or aggregate queries.
We design two algorithms to handle the timestamps and aggregate values---\fastup{} that preserves the asymptotic time complexity of operations in the base tree algorithm, incurring no additional asymptotic time cost on them; and \fastquer{} which offers a better time complexity for aggregate queries.

In the design of the tree algorithm's extension for supporting aggregate queries, effectual operations have to perform several additional steps in which they might potentially contend with operations of other threads: globally announce and unannounce themselves and update the metadata fields affected by the operation. 
\fastup{} aims to reduce the contention on effectual operations incurred by our extension, and thus lets effectual operations work mostly on single-writer fields written only by the thread that performs the operation. This manifests in both the announcement mechanism and the aggregate metadata representation:

In \fastup{}, the \textit{CurrUpdates} object---in which effectual operations announce themselves---is an array with a cell per thread to point to its \textit{Update} object. 
When effectual operations announce themselves in this array, they do not order themselves in respect to each other, and there is no variable they serialize on (like obtaining a unique timestamp).
Aggregate queries are the ones to grab a timestamp while incrementing a global \textit{Timestamp} field using a fetch-and-increment; effectual operations only need to obtain a timestamp bigger than the last query's timestamp, for writing their updates of versioned fields in a newer version, not overriding data the query needs.
For that, an effectual operation first announces itself with an unset timestamp, and then it obtains the global timestamp value and sets it in the announcement’s timestamp field using a \cas{}. A concurrent aggregate query might be ahead of it, obtaining the global timestamp value and \cas{}ing it into the announcement’s timestamp, which is what 
aggregate queries do for all effectual operations with an unset timestamp they encounter in their first traversal in \textit{CurrUpdates}.

As for the aggregate metadata field in each \fastup{} node, it is also an array with a cell per thread, where each cell is a versioned field (namely, holds a linked list of versions with different timestamps) containing metadata regarding operations by the associated thread on the node's subtree.
Aggregate queries can correctly calculate the total aggregate value from the per-thread values using $\oplus_f$ (the aggregate function's binary operation), thanks to $\oplus_f$ being commutative and associative---which is the case as we allow only an additive aggregate function $f$ (whose $\oplus_f$ is commutative and associative by \Cref{alternative definition: additive aggregate function}). 

\fastquer{} on the other hand favors the performance of aggregate queries, hence does not let them gather values from a per-thread metadata array; instead, it allocates a single versioned metadata field in each tree node. 
To update such a field to reflect an effectual operation, it needs to know which effectual operations are ordered before it, in order to update the metadata to reflect all relevant operations that occurred so far.
To this end, all effectual operations serialize by enqueueing their \textit{Update} object to a queue, and while doing so they also get a unique timestamp so that the timestamps induce a total order on all effectual operations (specifically, they enqueue an \textit{Update} object with a timestamp greater by 1 than the timestamp of the preceding \textit{Update} object in the queue). Namely, \textit{CurrUpdates} is a queue containing \textit{Update} objects with consecutive timestamps.
Aggregate queries obtain a timestamp (that determines which effectual operations they take into account) by simply reading the timestamp in the \textit{Update} object in the current last node of the queue (namely, the most recent \textit{Update}), which is considered the current global timestamp. Equipped with this timestamp, they know which version of each aggregate metadata field to obtain and which announced effectual operations they should consider.

In \Cref{subsection: design backbone} we elaborate on the common backbone of both algorithms, and the full details of the different points between the two algorithms are deferred to \Cref{section: fastup,section: fastquer}.

\subsection{The base tree}\label{subsection: base tree}

The proposed methodology for augmenting a concurrent binary tree with subtractive aggregate metadata to support aggregate queries, could be applied to different concurrent trees. We will focus on a specific concurrent tree to demonstrate the methodology---the linearizable binary search tree of \cite{david2015asynchronized,david2014designing}.
This is an external full binary tree (the elements are in the leafs; each internal node has two children). Three sentinel nodes are always part of the tree: a root node with the key $-\infty$, a leaf node (which is the root's left child) with the key $-\infty$, and a leaf with the key $\infty$. In particular, the root never changes. Each internal node has a memory-word-sized field containing two special locks, each used to protect the link to another one of its children. It is possible to acquire one of the locks without affecting the other, or permanently acquire them both in a single atomic operation (the latter has a different signature than when each separate lock is acquired, and is used to permanently lock a parent of a leaf that is about to be removed, since as part of the \del{} operation the parent will also be unlinked and its edges will never be modified again). We will refer to acquiring a node's lock associated with a certain child as locking the edge to the child, and permanently acquiring both locks as a permanent lock of the node. 

A \contains{}($k$) operation is simple and operates like in a sequential algorithm: it searches for $k$ until reaching a leaf node, and returns an answer based on the leaf's key.
An \ins{}($k$) operation wishes to insert a new leaf---$N$---with the key $k$. It first searches for $k$ in the tree. If it finds it, it returns failure. Otherwise, it reaches a leaf $L$ with a key $\neq k$. It locks the edge from $P$---$L$'s parent---to $L$ (or restarts if the attempt to lock failed), modifies this edge to point at a new internal node pointing to $L$ and $N$ as its children (and having the right child's key as its key), and then unlocks the edge.
A \del{}($k$) operation starts with searching for $k$ in the tree. If it does not find it, it returns failure. Otherwise, it reaches a leaf $L$ with the key $k$; let $S$ be its sibling, $P$---its parent, and $G$---its grandparent. The \del{} operation locks the edge from $G$ to $P$ and then permanently locks $P$ (or restarts if the attempt to lock any of them failed), modifies the edge from $G$ to $P$ to point at $S$ (which unlinks both $L$ and $P$ from the tree), and then unlocks the edge.

\subsection{Design backbone details}\label{subsection: design backbone}

We describe the backbone common to our two algorithms.
The tree is initialized with three sentinel nodes like in the base tree, and their aggregate value is set to \codestyle{identityElement$_{(B,\oplus_f)}$}.
Next we describe the general scheme of each operation on the tree.
Effectual operations (successful \ins{} and \del{}) acquire the necessary locks, and then before applying the operation to the tree---they globally announce themselves including obtaining a timestamp, and update affected aggregate metadata. Failing \ins{} and \del{} and \contains{} operate as in the base algorithm, but then in the end verify that no ongoing operation has already announced itself and logically deleted the node they found / inserted a node with the key they have not found. Aggregate queries use the aggregate metadata throughout their traversal like sequential aggregate queries, but they also grab a timestamp in the beginning, and obtain versions of child pointers and of aggregate metadata according to this timestamp and according to announced effectual operations.
Details follow.

\subsubsection{\ins{} and \del{} operations}\label{subsection: ins and del}
An \ins{} and a \del{} of a key \textit{k} are performed as follows:

\begin{enumerate}
\item\label{update: start base alg} Run the base tree algorithm until it is about to return \textbf{failure} or until (including) it \textbf{acquires lock/s}.

In the first case, let \textit{L} be the leaf reached by the traversal, \textit{P} be the parent from which it was reached, and \textit{direction} be left or right according to which child of \textit{P} \textit{L} is. Return failure only in the following cases:
    \begin{enumerate}
    \item If it is an insertion (namely, \textit{k} was found in \textit{L}), call \codestyle{isDeleted}(\textit{L, P}) (see \Cref{subsection: auxiliary methods}) and return failure if it returns false.
    \item If it is a deletion (namely, \textit{k} was not found), call \codestyle{getValueIfInserted}(\textit{L, P, direction, k}) (see \Cref{subsection: auxiliary methods}) and return failure if it returns NOT\_FOUND.
    \end{enumerate}
Otherwise, start \Cref{update: start base alg} over.
\item\label{update: announce} \textbf{Announce}: Add an \textit{Update} object to \textit{CurrUpdates}, including \textbf{obtaining $ts$}.
\item\label{update: aggValues} \textbf{Update aggregate values}: 
    \begin{enumerate}
    \item\label{update: aggValues: gather currUpdates} Gather \textit{CurrUpdates} with timestamp $\leq ts$ into \textit{currUpdates}.
    \item\label{update: aggValues: traverse} Traverse from the root to the leaf, and for each \textit{node} (excluding the leaf):
        \begin{enumerate}
        \item\label{update: aggValues: traverse: update} Update \textit{node.aggValue} by versioned writes, using \textit{currUpdates} (while traversing \textit{currUpdates}, ignore done operations and eliminate them from \textit{currUpdates}).
        \item\label{update: aggValues: traverse: choose next} Obtain next traversal's node (left or right child) using a versioned read with $ts$.
        \item\label{update: aggValues: traverse: eliminate from currUps} Eliminate out-of-range effectual operations (namely, effectual operations on keys < \textit{node.key} if proceeding to the right, or with keys $\geq$ \textit{node.key} if proceeding to the left) from \textit{currUpdates}.
        \end{enumerate}
    \end{enumerate}
\item\label{update: apply} \textbf{Apply} the operation to the target leaf area.
\item\label{update: unannounce} \textbf{Unannounce}: set \textit{Update.done} and remove the \textit{Update} from \textit{CurrUpdates}.
\item\label{update: finalize del} \textbf{Finalize} deletion: modify grandparent's pointer.
\item\label{update: unlock} \textbf{Release} the lock.
\end{enumerate}

In more detail, an \ins{}$(k,v)$ or a \del{}$(k)$ starts by running the base tree algorithm, where all reads of versioned fields are standard reads (i.e., return the value in the first version in the list) (\Cref{update: start base alg} above). It executes until one of the following occurs: The first alternative is that the base algorithm is right about to return failure. Before doing so, it must make sure that it is still correct to return failure in our extension of the algorithm---if the operation is an insertion and $k$ was found, it needs to make sure $k$ is not already considered deleted by an ongoing deletion, and if it is a deletion and $k$ was not found, it must make sure $k$ is not already considered inserted by an ongoing insertion
(see \Cref{subsection: auxiliary methods} for details). Otherwise it restarts, starting \Cref{update: start base alg} from the top.

The other alternative is that the necessary locks are acquired, right before the relevant tree's link is modified in the base algorithm. At this point, the operation is guaranteed to succeed but is not yet linearized (which intuitively means it is not yet considered as occurred, and concurrent operations will not consider it). It will be linearized only when completing the next stage---globally announcing itself by adding an $Update$ object with its details to \textit{CurrUpdates}, including obtaining a timestamp $ts$ (\Cref{update: announce}). 
In addition to the $ts$ and $done$ fields that were earlier mentioned, an $Update$ object also includes a $leaf$ field with the node that is inserted or deleted, as well as information about the tree's edge that is about to be modified: its source node (\textit{edgeSource}), its designated new target node (\textit{edgeTarget}) and its direction (\textit{edgeDirection}). 
The operation proceeds to update the aggregate metadata in the nodes in its root-to-leaf path. When updating, it also takes into account ongoing effectual operations with timestamp equal to or less than $ts$. For that, it gathers them from \textit{CurrUpdates} into a local copy \textit{currUpdates} (\Cref{update: aggValues: gather currUpdates}). This local copy is maintained throughout the traversal: done effectual operations (indicated by a $done$ flag set to true) are eliminated from it, as well as effectual operations with keys which are out of the range covered by the current subtree. The update is done using versioned writes (\Cref{update: aggValues: traverse: update}). To traverse its root-to-leaf path, it chooses each time whether to continue to the right or left child by simply searching for $k$, and then uses a versioned read with timestamp $ts$ to obtain the appropriate child (\Cref{update: aggValues: traverse: choose next}). 

The next step is to apply the operation to the target leaf area (\Cref{update: apply}): In case of an insertion, the child pointer from the parent to the target leaf (namely, the leaf at which the traversal arrived) is modified using a versioned write with $ts$ to point at a new internal node which has the new node and the target leaf as its children. In case of a deletion, we add a step that was not part of the base algorithm---to apply the deletion, the target leaf is marked as deleted by setting a $marked$ flag which we place in the tree leaves.
A deletion is finalized by pointing the target leaf's grandparent at the leaf's sibling using a versioned write with $ts$ only later---in \Cref{update: finalize del} (which is for deletions only).
After the operation is applied as detailed above, it sets its \textit{Update} object's $done$ field to true and removes the object from \textit{CurrUpdates} (\Cref{update: unannounce}). Lastly, it releases the lock as in the base algorithm---of the target leaf's original parent in case of an insertion or the target leaf's original grandparent in case of a deletion (\Cref{update: unlock}).
The reason for the special deletion scheme (using a mark in deletion, and ordering its steps to be marking the leaf, then setting \textit{done} in the \textit{Update} object and removing it from \textit{CurrUpdates}, then modifying the grandparent's pointer) is as follows:
Without the marking step, a deletion announcement would have been removed from \textit{CurrUpdates} while its leaf is still physically in the tree, and so \contains{} and a failing \ins{} might consider the node as in the tree. Instead, they check the \textit{marked} field in their \codestyle{isDeleted} call to correctly determine if the node is deleted, and thanks to the \textit{marked} field they do not miss a linearized deletion even if it has not yet physically unlinked the node from the tree. On the other hand we cannot physically unlink the node prior to 
unannouncing the deletion (setting its announcement's \textit{done} field and removing the announcement from \textit{CurrUpdates}), since if we did that then an operation $op$ that calculates aggregate metadata using \textit{CurrUpdates} (either an aggregate query, or an effectual operation that helps update aggregate metadata on behalf of other effectual operations) might gather from \textit{CurrUpdates} an announcement of a deletion whose parent is already unlinked and mistakenly consider it in later subtrees, as $op$ does not eliminate the deletion from its \textit{currUpdates} because it does not narrow the range according to the already-unlinked parent (which is solved by using the \textit{done} field to correctly eliminate the deletion from \textit{currUpdates}).

\subsubsection{Aggregate queries}\label{subsection: aggregate queries}
An aggregate query starts by obtaining a timestamp $ts$, and then continues like the sequential counterpart, running the traversals and \codestyle{computeAnswer} calls defined for the query, but with the following modifications to the template in \Cref{fig: basicAggregateQuery template} used to run the traversals (for which the template will take $ts$ as an input): 

\begin{enumerate}
\item\label{template modification: gather} As part of \textit{init}, gather \textit{CurrUpdates} with timestamp $\leq ts$ into a local copy \textit{currUpdates}, and if it is the first traversal ran for this query---also guarantee (in a way that will be detailed in \Cref{section: fastup,section: fastquer}) that no effectual operations other than the gathered ones will later obtain a timestamp $\leq ts$.
\item\label{template modification: eliminate out-of-range} After running \codestyle{shouldDescendRight}, eliminate out-of-range effectual operations (namely, effectual operations with key < \textit{node.key} if proceeding to the right, or with key $\geq$ \textit{node.key} if proceeding to the left) from \textit{currUpdates}.
\item\label{template modification: obtain versions} To obtain \textit{node.left}, \textit{node.right} and \textit{aggValue} of the obtained left child, access these fields through versioned reads with $ts$ and plug in the effect of the relevant effectual operations from \textit{currUpdates} (while traversing \textit{currUpdates}, ignore operations with \textit{done==true} and eliminate them from \textit{currUpdates}). The way their effect is plugged in will be detailed in \Cref{section: fastup,section: fastquer}.
\end{enumerate}

\subsubsection{\contains{} operation}

A \contains{}$(k)$ operation searches for $k$ until reaching a leaf $L$ as in the base tree algorithm, while additionally maintaining a \textit{prev} variable holding the previous traversed node and setting a \textit{direction} variable to left or right according to which child of \textit{prev} $L$ is. While searching, all reads of versioned \textit{left} and \textit{right} fields are standard reads (i.e., return the value in the first version in the list). If $L.key == k$, it calls \codestyle{isDeleted}(\textit{L, prev}) (see its description in \Cref{subsection: auxiliary methods}) and if \codestyle{isDeleted} returns false it returns $L$'s value else it returns NOT\_FOUND; otherwise ($L.key \neq k$)---it returns \codestyle{getValueIfInserted}(\textit{L, prev, direction, k}) (see its description in \Cref{subsection: auxiliary methods}).

\subsubsection{\codestyle{isDeleted} and \codestyle{getValueIfInserted} auxiliary methods}\label{subsection: auxiliary methods}

The \codestyle{isDeleted} and \codestyle{getValueIfInserted} auxiliary methods are used to help guarantee the correctness of non-effectual operations, by verifying whether the key they found to be inserted or deleted is not already considered deleted or inserted respectively. These methods' objective is to give an answer that is true at some point during the calling operation's interval, for a correct linearization. All their reads of versioned fields are standard reads (i.e., return the value in the first version in the list).

The method \codestyle{isDeleted} is called by \contains{} and \ins{} after they find the searched-for key in the tree, before they return its associated value or failure respectively, to verify that this key was not considered as already deleted by a deletion that has already taken a timestamp (thus considered linearized) but has not yet modified the tree structure. 
The method \codestyle{getValueIfInserted} is called by \contains{} and \del{} after they do not find the searched-for key, before they return a negative answer, to verify that this key was not considered as already inserted by an insertion that has already taken a timestamp (thus considered linearized) but has not yet linked the new node to the tree.
These methods' implementation presentation is deferred to \Cref{appendix: auxiliary methods}.

\section{Analysis}\label{section: analysis}
\subsection{Correctness}
Our algorithms are linearizable. Effectual operations and aggregate queries are linearized by timestamp order, where aggregate queries are linearized after effectual operations with the same timestamp, and effectual operations with the same timestamp in \fastup{} are ordered according to when the timestamp that is set in their announcement is obtained. The linearization of original tree operations that do not modify the tree (\contains{} and failing \ins{} and \del{}) is more delicate, and is detailed in \Cref{appendix: correctness}, with a linearizability proof.

\subsection{Time complexity}\label{subsection: complexity}

For each of the two presented algorithms, we analyze the time complexity of the new aggregate query operation, as well as the addition to the time complexity of the tree operations, incurred by our extension of the base algorithm.
In our analysis we denote the number of threads in the system by $t$, %
the number of effectual operations / aggregate queries whose execution interval overlaps with that of the analyzed operation by \textit{concUpdates} / \textit{concQueries}, and the number of threads running effectual operations whose execution interval overlaps with that of the analyzed operation by \textit{concUpdatingThreads} (this, unlike \textit{concUpdates}, is bounded by $t$). For an effectual operation, we denote by \textit{effectualDepth} the number of nodes traversed in its last traversal in \Cref{update: start base alg} in \Cref{subsection: ins and del} (the base algorithm might carry out multiple traversals due to restarts). For an aggregate query, we denote by \textit{queryDepth} the number of nodes traversed in its longest traversal (which is equal to the depth in the query's timestamp of the leaf reached in this traversal).

As shown in \Cref{appendix: complexity,section: optimizations}, when embedding our algorithms with the optimizations mentioned there, \fastup{} incurs no additional asymptotic time cost on any of the base tree's operations, and performs aggregate queries in $O(\textit{queryDepth}\cdot t\cdot\min\{\textit{concUpdates},\textit{concQueries}\})$ time. 
\fastquer{} performs aggregate queries in $O(\textit{queryDepth}\cdot\min\{\textit{concUpdates},\textit{concQueries}\})$ time, and incurs $O(\textit{concUpdatingThreads})$ additional time on \contains{} and failing \ins{} and \del{}, and $O(\textit{effectualDepth}\cdot\textit{concUpdatingThreads})$ on effectual operations. 

\section{Related work}\label{section: related work}

Aggregation has been studied in the database community, where aggregate queries like COUNT, SUM and AVG are used to calculate an aggregate value over several database rows. Our term definitions in \Cref{section: aggregate} are in the spirit of definitions from works on databases and consolidate them into a unified view on aggregate queries useful for addressing aggregate queries on concurrent trees.
\Cref{definition: additive aggregate function} resembles the definition of distributive aggregate function in \cite{GrayCBLRVPP97}, \Cref{alternative definition: additive aggregate function} resembles the definition of commutative-semigroup aggregation function in \cite{cohen2006rewriting}, and \Cref{definition: subtractive aggregate function} resembles the definition of distributive additive aggregate function in \cite{jurgens2002index}.

Various works have researched concurrent range queries that scan the keys in the range \cite{petrank2013lock,nelson2022bundling,arbel2018harnessing,brown2012range,avni2013leaplist,basin2017kiwi,winblad2018lock,chatterjee2017lock,fatourou2019persistent,wei2021constant,sheffi2022eemarq}, an approach which is too costly for implementing efficient concurrent aggregate queries that return more succinct information about a key range. A certain kind of concurrent aggregate queries has been addressed in \cite{sela2021concurrentSize}, which implements specifically a size query on sets and dictionaries. 
We study general concurrent aggregate queries, similarly tothe independent work of~\cite{kokorin2023wait}. Interestingly, \cite{kokorin2023wait} does not support a failure option for the \ins{} and \del{} operations. Such failures may require traversing the tree twice, which poses additional challenges.
In contrast, this work offers an improved space complexity over the current work, as they do not employ multi-versioning. Another advantage of \cite{kokorin2023wait} is that they also offer support for balanced trees.

Multi-versioning, where multiple versions of data structure objects are preserved when it is modified to enable queries to access old versions, has been employed in previous works which use multi-versioning to offer support for range queries (e.g., \cite{fatourou2019persistent,wei2021constant,sheffi2022eemarq,kobus2022jiffy}).
Our multi-versioning based design for concurrent trees with aggregate queries may be used to integrate also support for non-aggregate range queries, similarly to those works.
Using multi-versioning allows to reduce contention and interference between effectual operations and operations that do not modify the data structure: effectual operations leave versions for queries and do not need to help them further, and queries read old versions without interfering with newer updates. 
For example, the \contains{} operation in our design does not need to coordinate with concurrent operations or help them throughout its traversal (it might only need to look at the announced effectual operations once when it completes its traversal). %
Another advantage of using multi-versioning in our design is that it enables to support the more general form of aggregate queries, composing several simple aggregate queries when several serial traversals are required (e.g., for querying the median key of a given input key range). Thanks to using multi-versioning, we may support such composite queries by executing multiple simple aggregate queries over the tree and obtaining object versions baring the same timestamp in all of them.
Using multi-versioning requires employing appropriate garbage collection to reclaim unnecessary versions, which is an orthogonal problem that could be addressed using techniques from e.g. \cite{ben2021space,sheffi2022eemarq,wei2023practically}.

\section{Discussion}\label{section: discussion}
We addressed the problem of designing a concurrent efficient implementation for aggregate queries on trees. We formalized aggregate queries that could be efficiently supported on concurrent trees, presented a design that augments a concurrent binary search tree with such aggregate queries, and suggested two algorithms that implement this design, demonstrating a trade-off between aggregate query time and tree update operations.
It would be interesting to further investigate if there is a better alternative in this time trade-off between effectual operations and aggregate queries, as well as in the time-space trade-off, or whether it is impossible to achieve better solutions such as better aggregate query time complexity while still maintaining optimal time complexity for the original tree operations like our \fastup{} algorithm.
Another compelling research direction is a generalization of our design to additional trees, including lock-free, balanced and internal trees.

\appendix
\section{Aggregate query examples}\label{section: aggregate examples}

\illus{For illustration, we will take as a running example a tree of donations where a node's key is a donation amount (as an integer) and a node's value is a list of donors that gave a donation of this amount. 
For our metadata, we take $B$ to be $\mathbb{Z}$, and define $f(X)\coloneq \sum_{(key,value)\in X}key \cdot value.size$ (where $key$ is a donation amount and $value$ is a list of donors), namely, $f$ returns the sum of subtree donations.
$\oplus_f$ and $\ominus_f$ are simply addition and subtraction on $\mathbb{Z}$, and \codestyle{identityElement$_{(B,\oplus_f)}=0$}.}

The simplest aggregate queries are ones concerned with a key range of the form $(-\infty, k)$, such as \select{} and \rank{}, which may be answered using a simple aggregate query that requires a single traversal.
\illus{To illustrate a simple aggregate query that requires a single traversal on our running example, we inquire about the accumulated donation of all small donations up to a certain amount. For that we shall define a \codestyle{sumUpTo} query whose input would be the upper donation amount bound. 
The \codestyle{sumUpTo} query uses a single traversal to obtain the accumulated donation amount of all donations up to the given bound.
Its \codestyle{shouldDescendRight} returns true iff $input \geq key$. In this case the value of \codestyle{aggValueUpToCurrentKey} is not used, but there are queries which need it to decide on the traversal's direction. For instance, \select{} proceeds to the right child if $input$ (which is the required leaf's index) $>$ \codestyle{aggValueUpToCurrentKey} (which counts the leaves with key $\leq key$, a count that includes the leaves of the current node's left subtree).
The \codestyle{computeAnswer} method for \codestyle{sumUpTo} simply returns the aggregate value received from the traversal.}

Thanks to $f$ being subtractive, aggregate queries concerned with a key range of the form (\textit{lowKey,upKey}), such as \codestyle{sum}(\textit{lowKey,upKey}), may be computed using a simple aggregate query with two independent traversals, and then having \codestyle{computeAnswer} compute \textit{aggValueForUpKey}$\ominus_f$\textit{aggValueForLowKey} to get the aggregate value on the desired key range.
This could be extended to unions of such key ranges by having \codestyle{computeAnswer} apply $\oplus_f$ over the results of the sub-ranges. 
We note that \codestyle{computeAnswer} has an additional role other than combining the traversals' aggregate values into an aggregate value on the desired range: certain queries, such as average and sample variance queries, require additional calculations which \codestyle{computeAnswer} takes care of.
\illus{For example, to compute the average donation amount for a certain key range, we should hold different metadata: a $\langle sum,size\rangle$ tuple, and change $f$ to compute both sum and size. Traversals would return a tuple of $\langle sum,size\rangle$ values as their outputted aggregate value. \codestyle{computeAnswer} would compute the aggregate sum and size of the input key range by combining the traversals' outputted aggregate values using the $\oplus_f$ and $\ominus_f$ operators, and then divide the aggregate sum by the aggregate size to get the average.}

The most general aggregate queries we define are composite queries, that require serial execution of a chain of two or more queries one after another, as each query in the chain needs an input value which is the result of a previous query in the chain.
\illus{For example, we may run a \codestyle{medianKeyInRange}(\textit{lowKey,upKey}) query that would return the median key (or any other percentile as a generalization) of a given input key range. To implement such a query we may first run a simple aggregate query made of two independent \rank{} traversals: \rank{}(\textit{upKey}) and \rank{}(\textit{lowKey}). Let \textit{rankUp} and \textit{rankLow} be the aggregate values they return. Define this first simple aggregate query's \codestyle{computeAnswer} to return \textit{rankLow+(rankUp-rankLow)/2} (which is the index of the required median key). Then the second simple aggregate query should run a \select{} query on the output of the first one.}

\section{Implementation details}\label{section: implementation details}
\subsection{Node classes and the \codestyle{Update} class}\label{subsection: node and Update fields}
The fields of our node classes appear in \Cref{fig: node fields}. An internal node does not contain a value as we handle an external tree (in which the items are placed in the leaves). A leaf class does not include an \textit{aggValue} field since its aggregate value is constant and may be directly computed from its key and value as $f(\{(key,value)\})$.

\begin{figure}
\begin{lstlisting}
@\underline{class InternalNode}@:
    key
    aggValue
    left
    right
    leftRightLock
@\underline{class Leaf}@:
    key
    value
    marked
\end{lstlisting}
\caption{Fields of the node classes}\label{fig: node fields}
\end{figure}

Our \codestyle{Update} class fields appear in \Cref{fig: Update fields}.
\textit{edgeSource} contains the node whose \textit{edgeDirection} (left or right) child should be modified, namely, the grandparent node of the leaf to be deleted in case of a deletion, or the parent (which will become the grandparent after the insertion is complete) in case of an insertion.
\textit{edgeTarget} has the new child: the deleted node's sibling for a deletion, or the new internal node in case of an insertion.
\textit{leaf} is either the deleted node or the inserted new node.
\textit{operationKind} is either \del{} or \ins{}.

\begin{figure}
\begin{lstlisting}
@\underline{class Update}@:
    timestamp
    leaf
    edgeSource
    edgeTarget
    edgeDirection
    done
    operationKind
\end{lstlisting}
\caption{Fields of the \codestyle{Update} class}\label{fig: Update fields}
\end{figure}

\subsection{Versioned fields}\label{subsection: versioned field}
Both child pointers and aggregate value fields in the tree nodes in our algorithms are implemented using the \codestyle{VersionedField} class, whose pseudocode appears in \Cref{fig: VersionedField}. For child pointers the template parameter $T$, which indicates the value type of the field, is a pointer to a tree node. For aggregate values, $T$ is the aggregate type $B$.

\begin{figure}
\begin{lstlisting}
@\underline{class Version$\langle$T$\rangle$}@:
    T value
    int timestamp
    Version$\langle$T$\rangle$ next
@\underline{class VersionedField$\langle$T$\rangle$}@:
    Version$\langle$T$\rangle$ versionListHead;
    @\underline{read()}@:
        return versionListHead.value
    @\underline{standardTimestampedRead()}@:
        version = versionListHead
        return (version.value, version.timestamp)
    @\underline{versionedRead(ts)}@:
        version = versionListHead
        while version.timestamp > ts:
            version = version.next
        return version.value
    @\underline{write(v, ts)}@:
        versionListHead = new Version$\langle$T$\rangle$(v, ts, versionListHead)
    @\underline{writeIfTimestamp(lastTs, newValue, newTs)}@:
        firstVersion = versionListHead
        if firstVersion.timestamp == lastTs:
            versionListHead.CAS(firstVersion, new Version$\langle$T$\rangle$(newValue, newTs, firstVersion))
    \end{lstlisting}
\caption{The \codestyle{VersionedField} class}\label{fig: VersionedField}
\end{figure}

\subsection{\codestyle{isDeleted} and \codestyle{getValueIfInserted} auxiliary methods}\label{appendix: auxiliary methods}

The usage of the \codestyle{isDeleted} and \codestyle{getValueIfInserted} auxiliary methods is described in \Cref{subsection: auxiliary methods}. Next we bring their implementation. 

\begin{figure}[b]
\begin{lstlisting}
@\underline{isDeleted(\textit{leaf, parent})}@:
if @\textit{parent}@ is not permanently locked: @\label{isDeleted: not locked start}@
    return false @\label{isDeleted: not locked end}@
if a deletion @\textit{del}@ with the key @\textit{leaf.key}@ is found in @\textit{CurrUpdates}@: @\label{isDeleted: found delete start}@
    Guarantee @\textit{del.timestamp}@ is set@\label{isDeleted: guarantee ts}@
    return true @\label{isDeleted: found delete end}@
return @\textit{leaf.marked}@ @\label{isDeleted: marked}@
\end{lstlisting}
\caption{Pseudocode for the \codestyle{isDeleted} method}\label{fig: isDeleted}
\end{figure}

The method \codestyle{isDeleted} (\Cref{fig: isDeleted}) checks if the node from which the caller reached \textit{leaf} is not permanently locked (which should be the case for an internal node that is under deletion), and if that is the case, it returns false to indicate the key was not deleted at the time the calling operation obtained \textit{leaf} (\Crefrange{isDeleted: not locked start}{isDeleted: not locked end}).
Otherwise, an announcement of a deletion with the searched-for key is looked for in \textit{CurrUpdates}. If such an announcement is found, the caller makes sure that it has a set timestamp (as detailed in \Cref{subsection: the 2 algs}) and then returns true (\Crefrange{isDeleted: found delete start}{isDeleted: found delete end}).
Else, the method returns a value according to the \textit{marked} flag of \textit{leaf} (\Cref{isDeleted: marked}).

\begin{figure}
\begin{lstlisting}
@\underline{getValueIfInserted(\textit{leaf, parent, leafDirection, searchedKey})}@:
if the edge from @\textit{parent}@ in direction @\textit{leafDirection}@ is locked and an insertion @\textit{ins}@ with the key @\textit{searchedKey}@ is found in @\textit{CurrUpdates}@: @\label{getValueIfInserted: found insert start}@
    Guarantee @\textit{ins.timestamp}@ is set@\label{getValueIfInserted: guarantee ts}@
    return @\textit{ins.leaf.value}@@\label{getValueIfInserted: found insert end}@
@\textit{curr}@ = the child pointer from @\textit{parent}@ in direction @\textit{leafDirection}@@\label{getValueIfInserted: search start}@
while @\textit{curr}@ is not a leaf:
    if @\textit{searchedKey}@ < @\textit{curr.key}@:
        @\textit{curr = curr.left}@
    else:
        @\textit{curr = curr.right}@
if @\textit{curr.key}@ == @\textit{searchedKey}@:
    return @\textit{curr.value}@@\label{getValueIfInserted: return curr.value}@
return NOT_FOUND@\label{getValueIfInserted: search end}@
\end{lstlisting}
\caption{Pseudocode for the \codestyle{getValueIfInserted} method}\label{fig: getValueIfInserted}
\end{figure}

The method \codestyle{getValueIfInserted} (\Cref{fig: getValueIfInserted}) checks if the edge to the found leaf is locked (separately from the other edge from \textit{parent}, namely, this is not a permanent lock of \textit{parent}), and if so, \textit{CurrUpdates} is searched for an insertion announcement with \textit{searchedKey}. If such an announcement is found, the caller makes sure that it has a set timestamp (as detailed in \Cref{subsection: the 2 algs}) and then returns its value (\Crefrange{getValueIfInserted: found insert start}{getValueIfInserted: found insert end}).
In any other case (the edge from which the caller reached \textit{leaf} is not locked or there is no insertion announcement with \textit{searchedKey} in \textit{CurrUpdates}), a traversal (similar to the search performed by \contains{} in the base tree algorithm) is resumed from the current child of \textit{parent} in direction \textit{leafDirection} (\Crefrange{getValueIfInserted: search start}{getValueIfInserted: search end}).

\section{\fastup{} details}\label{section: fastup}

We fill in the missing details in the operations' implementation that were not covered so far. Throughout the description, we denote the id of an executing thread by \textit{tid}.
The tree holds several global members: the root node, an integer \textit{Timestamp} field which holds the current timestamp, and the \textit{CurrUpdates} array with a per-thread cell to be pointed at its ongoing effectual operation.
See \Cref{fig:fastup} for a visualization of the tree parts.

\begin{figure*}%
  \centering
  \includegraphics[width=1.1\textwidth]{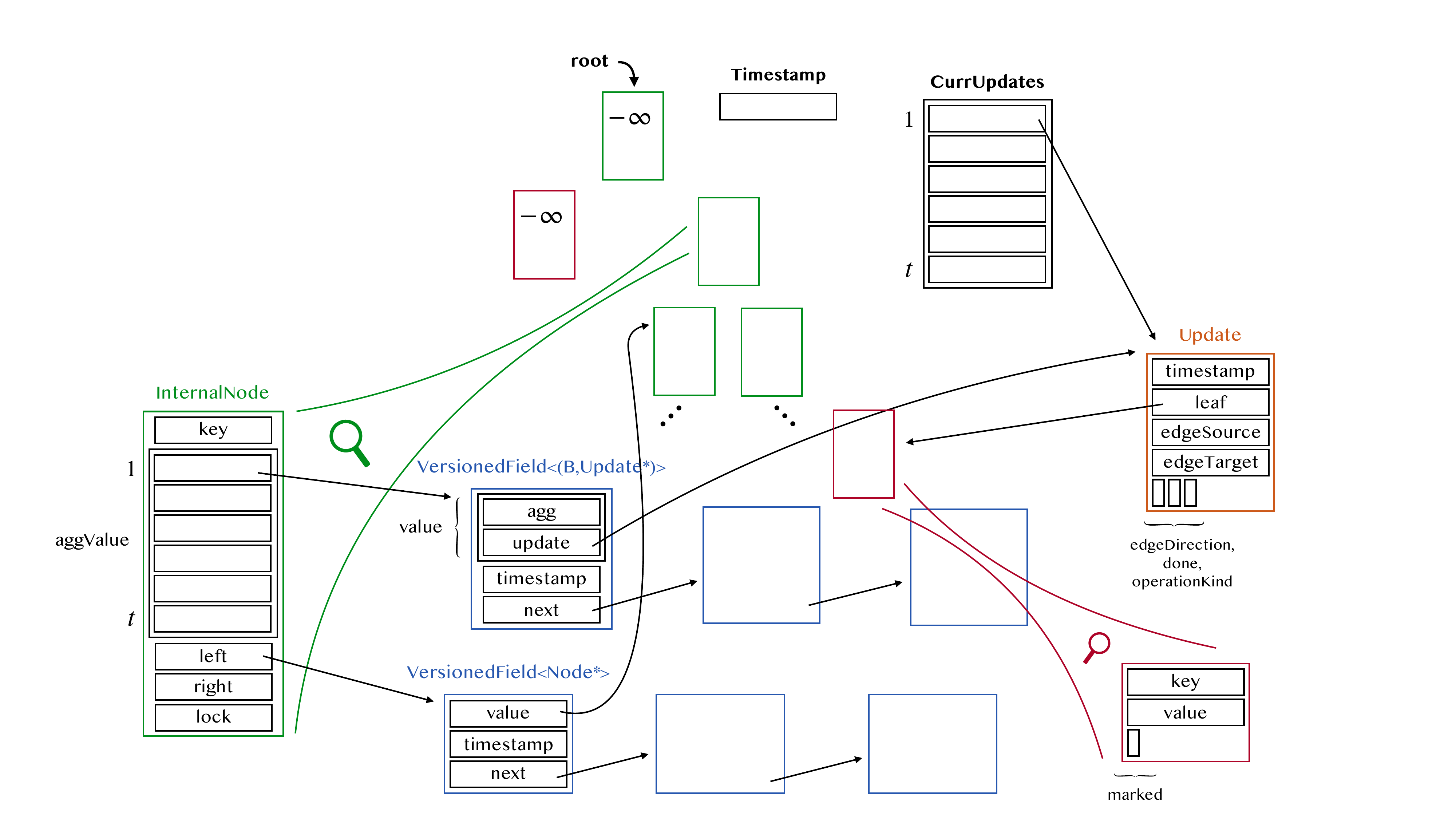}
  \caption{\fastup{} representation}
  \label{fig:fastup}
\end{figure*}

We start with describing the announcement mechanism and the way operations get their timestamp. An effectual operation $op$ announces itself by creating an \textit{Update} object (\Cref{fig: Update fields}) with its details (and a special \textit{NOT\_SET} value at the timestamp field) and pointing \textit{CurrUpdates[tid]} at it. It then performs \textit{CurrUpdates[tid].timestamp.CAS(NOT\_SET, Timestamp)} to set the timestamp of $op$ to the current global timestamp (if it has not yet been set by a concurrent aggregate query, hence the usage of \cas{} which stands for compare-and-swap). This completes \Cref{update: announce} in \Cref{subsection: ins and del}. 
To unannounce itself (\Cref{update: unannounce} in \Cref{subsection: ins and del}) it simply sets \textit{CurrUpdates[tid]} to NULL.

The operation could not have announced the \textit{Update} object with the timestamp field already set, because had it done so, it might have announced the operation with a too old timestamp, preceding the timestamps of aggregate queries that have already executed without considering $op$. This mechanism---of first announcing, then obtaining the global timestamp value, then setting it in the announcement's timestamp field---ensures that effectual operations will not be announced with old timestamps.
This together with the help by aggregate queries to set the timestamp if they observe it before it is set (as detailed next), enables aggregate queries to prevent effectual operations that they have not considered as preceding them from taking a timestamp smaller than theirs.

An aggregate query operation first of all grabs a timestamp $ts$ (see \Cref{subsection: aggregate queries}), which it accomplishes by executing a fetch-and-increment on the \textit{Timestamp} field (the fetch-and-increment atomically reads the old value---which is returned and will be the query's timestamp---and increments it by 1). 
In each of the operation's traversals, the operation goes through the array \textit{CurrUpdates} and gathers the objects pointed from it that have timestamp $\leq ts$ into \textit{currUpdates}, as mentioned in \Cref{template modification: gather} in the traversals' template modifications detailed in \Cref{subsection: aggregate queries}. In all the traversals this query runs but the first one, \textit{Update} objects with a \textit{NOT\_SET} timestamp value are considered as future ones and are disregarded. But the first traversal is responsible to guarantee that no effectual operations other than the ones it gathers will obtain a timestamp $\leq ts$. To this end, while it goes through \textit{CurrUpdates} to gather preceding effectual operations, for any object \textit{CurrUpdates[i]} it encounters whose timestamp value is \textit{NOT\_SET}, it carries out the following: It first executes \textit{CurrUpdates[i].timestamp.CAS(NOT\_SET, Timestamp)}. After this \cas{}, whether it was the \cas{} of this operation that succeeded, or it failed because a concurrent operation succeeded, \textit{CurrUpdates[i].timestamp} is guaranteed to be set. Therefore it reads its value again, and if it is $\leq ts$ it gathers it into \textit{currUpdates}.

\contains{} as well as failing \ins{} and \del{} operations (namely, ones that return failure) might also need to verify that a timestamp field of an \textit{Update} object they hold is set (in \Cref{isDeleted: guarantee ts,getValueIfInserted: guarantee ts} in the \codestyle{isDeleted} and \codestyle{getValueIfInserted} methods respectively). They do it similarly, by applying \textit{CAS(NOT\_SET, Timestamp)} to the \textit{timesatmp} field of the \textit{Update} object if its value is \textit{NOT\_SET}.

Effectual operations announce themselves in \fastup{} without serializing on some global variable and without being aware of which effectual operations precede them; each of them performs work only on behalf of itself (unlike in \fastquer{} where, as will be shown in \Cref{section: fastquer}, they need to help preceding ones). Therefore, \Cref{update: aggValues: gather currUpdates,update: aggValues: traverse: eliminate from currUps} in \Cref{subsection: ins and del} are void in \fastup{}, and in \Cref{update: aggValues: traverse: update} there is no usage of \textit{currUpdates}.

Moving on to the aggregate metadata representation, in \fastup{} an \textit{aggValue} field in a tree node is an array with a versioned field cell per thread. The value in this versioned field is made of two variables: \textit{agg}---the aggregate value, and \textit{update}---each version value will also hold a pointer to the \textit{Update} object announcing the operation that created this version. 
An effectual operation, with \textit{update} being a pointer to its \textit{Update} object, updates the aggregate value for each node in its traversal (\Cref{update: aggValues: traverse: update} in \Cref{subsection: ins and del}) as follows. It obtains the current aggregate value for this thread (which represents the value of $f$ on all elements currently in this node's subtree that were inserted by the current thread, before considering the current operation): \textit{currValue = node.aggValue[tid].read().agg}, calculates the new aggregate value:
\textit{newValue = currValue $\oplus_f$ f(\{(update.leaf.key,update.leaf.value)\})} if this is an insertion, or the same but with $\ominus_f$ instead of $\oplus_f$ if this is a deletion (in accordance with \Cref{property: update for additive,property: update for subtractive}); and finally performs a versioned write: \textit{node.aggValue[tid].write((newValue, update), update.timestamp)}.

An aggregate query needs to calculate the aggregate metadata value of left children along its traversal (see \Cref{template modification: obtain versions} in \Cref{subsection: aggregate queries}), which it does using its current \textit{currUpdates} as follows. Let \textit{left} be  the obtained left child of a node whose key is $k$.
The query first obtains for each thread its aggregate value in \textit{left} in the query's timestamp $ts$ (namely, the value of $f$ over elements in \textit{left}'s subtree inserted by this thread in timestamps $\leq ts$), and then computes the total aggregate value (over all elements in the subtree) from all the per-thread values. In detail, for each thread id $tid$, it performs a versioned read \textit{value$_{tid}$ = left.aggValue[tid].read(ts)}, and takes \textit{aggValue$_{tid}$ = value$_{tid}$.agg}, unless \textit{currUpdates[tid] $\neq$ NULL} and \textit{currUpdates[tid].leaf.key < k} and \textit{value$_{tid}$.update $\neq$ currUpdates[tid]} (which means an ongoing effectual operation by thread $tid$ on a leaf in \textit{left}'s subtree has not yet updated its aggregate value in \textit{left} so its effect should be computed using its announcement) in which case it takes \textit{aggValue$_{tid}$ = value$_{tid}$.agg $\oplus_f$ f(\{(currUpdates[tid].leaf.key, currUpdates[tid].leaf.value)\})}, or the same but with $\ominus_f$ instead of $\oplus_f$ if \textit{currUpdates[tid].operationKind} is \del{}.
It then computes the total aggregate value by $\oplus_f$\textit{aggValue$_{tid}$} over all \textit{tid}s.

To complete the algorithm's details, we explain how \textit{node.left} and \textit{node.right} are obtained during an aggregate query's traversal (see \Cref{template modification: obtain versions} in \Cref{subsection: aggregate queries}). 
To this end, an aggregate query with timestamp $ts$ performs the following (we describe how to obtain \textit{node.left}, the description for \textit{right} is similar): It checks if there exists an announcement \textit{update} in \textit{currUpdates} with \textit{update.sourceEdge == node} and \textit{edgeDirection == left}. 
(As an optimization, searching \textit{currUpdates} could be done only if the appropriate lock is not acquired.)
If so, it takes \textit{update.edgeTarget} as the desired value. Otherwise, it takes the result of a versioned read of \textit{node.left} with $ts$.

\section{\fastquer{} details}\label{section: fastquer} 

As for \fastquer{}, its global members are the root node and the \textit{CurrUpdates} queue.
Effectual operations enqueue themselves to the queue to announce themselves (\Cref{update: announce} in \Cref{subsection: ins and del}), and the queue's enqueue operation takes care of setting the operation's timestamp to be the timestamp of the last operation in the queue +1. The wait-free queue of \cite{kogan2011wait}, which is based on \cite{michael1996simple}, may be used as a basis for the \textit{CurrUpdates} queue. It could be easily extended to contain consecutive timestamps in the nodes (by setting the timestamp field to the value in the current tail node +1 and trying to enqueue this node using a \cas{}). It could also be extended to enable the removal of a node that is not in the head of the queue, as should be done to unannounce an effectual operation (\Cref{update: unannounce} in \Cref{subsection: ins and del}).

Aggregate queries obtain a timestamp by reading the timestamp in the \textit{Update} object in the current last node of the queue (which is the last one to be added to the queue).
To gather announcements with timestamp $\leq ts$ from \textit{CurrUpdates} into a local \textit{currUpdates} (by either an effectual operation with timestamp $ts$ in \Cref{update: aggValues: gather currUpdates} in \Cref{subsection: ins and del}, or an aggregate query with timestamp $ts$ as mentioned in \Cref{template modification: gather} in the traversals' template modifications detailed in \Cref{subsection: aggregate queries}), the operation goes through \textit{CurrUpdates}'s nodes and copies their announcements' pointers starting from the first (oldest) node until reaching a node with timestamp $>ts$, whose announcement is excluded from the gathered announcements.
In a similar fashion, when \contains{} and failing \ins{} and \del{} operations traverse \textit{CurrUpdates} (in \Cref{isDeleted: found delete start,getValueIfInserted: found insert start} in the \codestyle{isDeleted} and \codestyle{getValueIfInserted} methods respectively), they in fact first obtain the current global timestamp by reading the timestamp in the \textit{Update} object in the current last node of the queue, and then go through \textit{CurrUpdates}'s nodes and copy their announcements' pointers starting from the first node until reaching a node with timestamp greater than the timestamp they obtained (otherwise, they might continue going through \textit{CurrUpdates} forever as more and more concurrent operations enqueue their announcements).

Unlike in \fastup{}, in \fastquer{} an aggregate query with timestamp $ts$ needs to do nothing to guarantee that effectual operations will not later obtain a timestamp $\leq ts$, as after it obtains the timestamp $ts$, any later operation will be allocated a bigger timestamp. Hence, guaranteeing that in the first query's traversal as mentioned in \Cref{template modification: gather} in \Cref{subsection: aggregate queries} is void in \fastquer{}.
Also, for \contains{} and failing \ins{} and \del{} operations, guaranteeing a timestamp in an \textit{Update} object they obtained from \textit{CurrUpdates} is set (in \Cref{isDeleted: guarantee ts,getValueIfInserted: guarantee ts} in the \codestyle{isDeleted} and \codestyle{getValueIfInserted} methods respectively) is void in \fastquer{}, since nodes are enqueued to the \textit{CurrUpdates} queue with a set timestamp.

We move on from the timestamp and announcement handling in \fastquer{} to its aggregate value access scheme.
To update the aggregate value of a node to reflect an effectual operation with timestamp $ts$ (\Cref{update: aggValues: traverse: update} in \Cref{subsection: ins and del}), it needs to add a version with timestamp $ts$ and the appropriate aggregate value to the linked list of versions in \textit{node.aggValue}. However, it must first update the aggregate value on behalf of ongoing effectual operations with smaller timestamps. To update the aggregate metadata on behalf of those effectual operations that have not yet updated it and on behalf of itself, it performs the following until going through \textit{currUpdates} (which is ordered by timestamp like the \textit{CurrUpdates} queue, and includes the current operation itself in the last announcement) to its end: executing a standard timestamped read of \textit{node.aggValue} to obtain its \textit{(lastValue, lastTs)}; continuing to go through \textit{currUpdates} (from the last point reached at the previous iteration) until reaching an object \textit{update} with a timestamp greater than \textit{lastTs}; calculating the new aggregate value taking \textit{update} into account by computing \textit{newValue = lastValue $\oplus_f$ f(\{(update.leaf.key, update.leaf.value)\})} if \textit{update.operationKind} is \ins{}, or the same but with $\ominus_f$ instead of $\oplus_f$ if it is \del{}; and trying to perform a thread-safe write of \textit{newValue} with timestamp \textit{update.timestamp} given the current version list's timestamp is still \textit{lastTs} (namely, running \textit{node.aggValue.writeIfTimestamp(lastTs, newValue, update.timestamp)}, for which there is no need to check for failure as it fails only if another thread performed the same write).

An aggregate query with timestamp $ts$ needs to calculate the aggregate metadata value of left children along its traversal (see \Cref{template modification: obtain versions} in \Cref{subsection: aggregate queries}), which it does using its current \textit{currUpdates} as follows. Let \textit{left} be  the obtained left child of a node whose key is $k$.
The query first executes a standard timestamped read of \textit{left.aggValue} to obtain its current \textit{(value, \textit{currTs})}. If $\textit{currTs}==ts$ it returns \textit{value}; if $\textit{currTs}>ts$, it executes a versioned read of \textit{left.aggValue} with $ts$ and returns the the returned value. Otherwise, it needs to plug in the effect of relevant ongoing operations on top of \textit{value}. For that, it goes through \textit{currUpdates}, and for each object \textit{update}, if \textit{update.leaf.key < k} and \textit{update.timestamp > ts} then it calculates the new aggregate value taking \textit{update} into account by computing \textit{value = value $\oplus_f$ f(\{(update.leaf.key, update.leaf.value)\})} if \textit{update.operationKind} is \ins{}, or the same but with $\ominus_f$ instead of $\oplus_f$ if it is \del{}. After exhausting \textit{currUpdates}, \textit{value} has the required aggregate metadata value.
As for how an aggregate query obtains \textit{node.left} and \textit{node.right} during its traversals, it does so the same as in \fastup{} (see details in the end of \Cref{section: fastup}).

\section{Correctness}\label{appendix: correctness}
\fastup{} and \fastquer{} are linearizable.
In this section we detail the linearization points of their operations, and use them to prove linearizability.

\subsection{Linearization Points}\label{section: linearization points}
Effectual operations are ordered by their timestamps. For \fastup{} we set the linearization point of an effectual operation to be at the moment the timestamp that is eventually set in its \textit{Update} announcement using a \cas{} is obtained (by either the operation itself or a helping aggregate operation); for \fastquer{} we set it at the linearization point of the enqueue of its \textit{Update} object to the \textit{CurrUpdates} queue (which is, by definition of linearization points of the underlying queue \cite{kogan2011wait}, when the node with this \textit{Update} object is physically linked to the queue).
Aggregate queries are also ordered by timestamp, such that they occur after effectual operations of the same timestamp. We accordingly set their linearization points to be at the fetch-and-increment of the \textit{Timestamp} field in \fastup{}, and in \fastquer{} -- when they obtain in their beginning the pointer to the last node of the \textit{CurrUpdates} queue, from which they obtain the global timestamp.

Lastly, \contains{} and failing \ins{} and \del{} operations are linearized according to their last \codestyle{isDeleted} or \codestyle{getValueIfInserted} call (\contains{} always has only one such call) as follows. Let $op$ be such an operation on a key $k$, $m$ be the last method it called (\codestyle{isDeleted} or \codestyle{getValueIfInserted}), and $L$---the leaf node $op$ passes to $m$.
$op$ is linearized at a linearization point as follows:
\begin{enumerate}
    \item If $m$ is \codestyle{isDeleted}, and it returns false: a moment within $op$'s interval which is after the linearization point of the insertion of $L$ and before the linearization point of the deletion of $L$ (if such a deletion occurs in the execution).
    \item If $m$ is \codestyle{isDeleted}, and it returns true: a moment within $op$'s interval which is after the linearization point of the deletion of $L$ and before the linearization point of the next insertion of the key $k$ (if such an insertion occurs in the execution).
    \item If $m$ is \codestyle{getValueIfInserted}, and it returns the value that is in a node $N$ ($N$ is \textit{ins.leaf} if $m$ returns in \Cref{getValueIfInserted: found insert end}, or \textit{curr}'s value when \Cref{getValueIfInserted: return curr.value} is executed if $m$ returns in \Cref{getValueIfInserted: return curr.value}): a moment within $op$'s interval which is after the linearization point of the insertion of $N$ and before the linearization point of the deletion of $N$ (if such a deletion occurs in the execution).
    \item If $m$ is \codestyle{getValueIfInserted}, and it returns NOT\_FOUND: a moment within $op$'s interval in which $k$ is logically not found in the tree; namely, if an insertion of $k$ to the tree has been linearized at any point in the execution by the time $op$ returns, then a moment within $op$'s interval in which the last operation on $k$ to be linearized was a deletion, otherwise any moment within $op$'s interval is fine.    
\end{enumerate}
We should show the last linearization points are well-defined, as we shall do in \Cref{lemma: well defined lps}.

\subsection{Linearizability Proof}\label{section: linearizability proof}
We prove that our algorithms are linearizable using the equivalent definition of linearizability that is based on linearization points (see \cite[Section 7]{sela2021linearizabilityfull} and the atomicity definition in \cite{lynch1996distributed}). We need to show that (1) each linearization point occurs within the operation's execution time, and that (2) ordering an execution's operations (with their results) according to their linearization points forms a legal sequential history.
(1) is pretty straightforward for the linearization points we defined, although the linearization point definition for \contains{} and failing \ins{} and \del{} operations does require proving that they are well defined, as we do in \Cref{lemma: well defined lps}. We prove (2) in \Cref{claim: results comply}.

\begin{lemma}\label{lemma: well defined lps}
The linearization points of \contains{} and failing \ins{} and \del{} operations as defined in \Cref{section: linearization points} are well defined. Namely, a moment as described in them indeed occurs within $op$'s interval.
\end{lemma}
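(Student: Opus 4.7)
The plan is to handle the four cases of the linearization-point definition in \Cref{section: linearization points} separately, exhibiting in each case a witness moment $t^*$ within $op$'s execution interval that is correctly sandwiched between the prescribed linearization points. In every case $t^*$ will be a moment during the call to the auxiliary method $m$, and the sandwiching will follow from combining what $m$ observed with a short list of invariants of the effectual-operation template in \Cref{subsection: ins and del}: (i) for a deletion, the parent is permanently locked in \Cref{update: start base alg} strictly before the linearization point in \Cref{update: announce}; (ii) an effectual operation's announcement resides in \textit{CurrUpdates} from its linearization point until the end of \Cref{update: unannounce}; (iii) the mark of the deleted leaf in \Cref{update: apply} and the grandparent pointer update in \Cref{update: finalize del} happen strictly after \Cref{update: announce}; (iv) symmetrically, an insertion's new leaf is not reachable by standard reads before its linearization; together with the fact that each internal node can be permanently locked by at most one deletion.

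For Case~1, if \codestyle{isDeleted} returned false because $P$ was read as not permanently locked, (i) and the locking fact give that no deletion of $L$ has yet linearized; otherwise $m$ observed $L.\textit{marked}=\textit{false}$ after finding no matching announcement in \textit{CurrUpdates}, and a linearized deletion of $L$ would either still have its announcement present by (ii) or have completed \Cref{update: unannounce}, which by (iii) makes $L.\textit{marked}=\textit{true}$, both contradicting the readings. Combined with $L$'s insertion having linearized before $op$'s first dereference of $L$ (since $L$ appeared in the tree), the required sandwich follows. For Case~2, the two branches by which \codestyle{isDeleted} returns true (a matching announcement with a set timestamp, or $L.\textit{marked}=\textit{true}$) both force the deletion of $L$ to have linearized by the observation, using the lp definition in \Cref{section: linearization points} and (iii) respectively; the upper bound then follows because any insertion of $k$ whose search reaches $L$ is forced by the extended \ins{} to restart whenever \codestyle{isDeleted} returns true, so the next insertion of $k$ cannot linearize until $L$ is unlinked in \Cref{update: finalize del} of $L$'s deletion, and $op$ having reached $L$ via standard reads places $op$'s invocation before that unlinking. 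Case~3 is essentially symmetric to Case~1 applied to the insertion of $N$ in place of a deletion of $L$: either the insertion's announcement is observed in \textit{CurrUpdates} on the locked edge with a set timestamp (lp past, but the leaf is not yet physically attached by (iv), precluding any deletion of $N$), or the resumed search physically reaches $N$, in which case (iv) places $N$'s insertion lp in the past and the Case~1 argument rules out a prior deletion of $N$.

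The main obstacle will be Case~4: I must lift the syntactic observation ``the resumed in-tree search did not find $k$'' to the semantic statement that the last linearized operation on $k$ preceding $t^*$, if any, is a deletion. My plan is to argue by contradiction: if the last linearized operation on $k$ before the observation were an insertion, then either its new leaf would still be physically reachable by the standard-read search (contradicting the observation, since by (ii) the insertion's unannouncement cannot have occurred without \Cref{update: apply} already physically attaching the leaf) or its announcement would still reside in \textit{CurrUpdates} on the appropriate locked edge (contradicting the preceding branch of \codestyle{getValueIfInserted}, which in that configuration would have returned the inserted value and not reached the final \codestyle{NOT\_FOUND}). Assembling the four cases, all witness moments lie within $op$'s interval, proving the lemma.
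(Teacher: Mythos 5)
Your high-level approach matches the paper's: case analysis by the auxiliary method's return branch, with witness moments sandwiched using the ordering of the effectual-operation steps in \Cref{subsection: ins and del}. However, two of your cases have genuine gaps, and a third relies on an underspecified invariant.

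In Case~4 your dichotomy (``the new leaf is physically reachable by the in-tree search'' versus ``the announcement still resides in \textit{CurrUpdates} on a locked edge'') is not exhaustive at the time that matters. The branch at \Cref{getValueIfInserted: found insert start} is evaluated at some time $t_{\text{check}}$, which is before the in-tree search and before your chosen $t^*$. If the culprit insertion $I$ reaches its linearization point (\Cref{update: announce}) only after $t_{\text{check}}$ but before $t^*$, then at $t_{\text{check}}$ its announcement is absent and $N$ is not yet linked, so neither horn of your dichotomy fires and no contradiction results. The paper closes this exactly by strengthening the contradiction hypothesis to ``an insertion of $k$ linearized prior to $op$'s \emph{invocation} and no deletion of $N$ until $op$ returns,'' which forces the lock on $P$ to be already held at $op$'s invocation and lets one derive that by $t_{\text{check}}$ the announcement must be gone, hence $N$ physically linked; it then additionally invokes the invariant that the base tree preserves ancestor--descendant relations (so the linked $N$ stays in $P$'s subtree and the resumed search must hit it), a step you omit entirely. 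Without fixing the hypothesis and adding the subtree-containment argument, Case~4 does not go through.

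In Case~3 your claim that the second branch is handled by ``the Case~1 argument'' does not work: what must be shown is that $op$ is \emph{invoked before} the deletion of $N$ linearizes. The paper's key step is that if $m$'s resumed search reaches $N$, then $N$ was physically linked \emph{after} $op$'s invocation (otherwise $op$'s earlier traversal would have found $N$ before ever calling \codestyle{getValueIfInserted}), and at the linking moment $N$'s insertion is still ongoing, so no deletion of $N$ can yet be announced, hence not yet linearized. Your invariant~(iv) places $N$'s insertion linearization point in the past of the search, which is the opposite temporal bound from the one you need and says nothing about $op$'s invocation versus $N$'s linking. The paper also uses a separate structural fact here---two effectual operations on the same key cannot be announced simultaneously because of the lock they hold---which your invariant list does not contain. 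Finally, in Case~1(a), invariant~(i) together with your unnamed ``locking fact'' is weaker than what the paper actually uses: one needs the invariant that a node may be pointed to only by its current parent or by permanently-locked former parents, so that observing $P$ not permanently locked while $P$ pointed at $L$ rules out an intervening change of $L$'s parent; without this, a deletion of $L$ could have permanently locked some later parent $P'$ while leaving $P$ unlocked.
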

\begin{proof}
We use the same notations as in \Cref{section: linearization points}. Additionally, let $P$ be the parent node from which $op$ reached $L$ and which it passes to $m$.

We start with the case of $m$ being \codestyle{isDeleted} which returns false.
Let $t$ be the moment in which $op$ obtains $L$ from its parent. The \ins{} operation that inserts $L$ is already linearized at time $t$, because the fact that $op$ reaches $L$ in its traversal means that this \ins{} operation must have already physically linked $L$ to the tree, which it does only after its linearization point. It remains to show that the \del{} operation of $L$ (if such a deletion occurs in the execution) is not yet linearized at time $t$. We look at two cases according to the line at which $m$ returns. If it returns in \Cref{isDeleted: not locked end}, it observes $P$ when it is not permanently locked.
We observe that in the base tree algorithm, a tree node may be pointed only from its current parent in the tree and permanently-locked nodes that were its parent in the past (before they were unlinked from the tree when the node's then-sibling was deleted). When a node is deleted, the single node pointing to it that was not permanently locked becomes permanently locked as well, namely, all nodes pointing to a deleted node are permanently locked. Back to our proof, this implies that $L$ could not be deleted when the not-permanently-locked $P$ pointed to it.
Alternatively, $m$ returns in \Cref{isDeleted: marked}. This means that $m$ observed no announcement of a deletion of $k$ in \textit{CurrUpdates}, and then observed \textit{L.marked==false}. But from the moment the deletion of $L$ is linearized, either its announcement is found in \textit{CurrUpdates} or \textit{L.marked==true} (as the deletion's linearization point occurs when the announcement is in \textit{CurrUpdates}, and the announcement is removed from \textit{CurrUpdates} only after \textit{L.marked} is permanently set to true). So $L$'s removal was not linearized when $op$ obtained $L$.

Next, for the case of $m$ being \codestyle{isDeleted} which returns true: 
The linearization point of the deletion of $L$ obviously happens before the linearization point of the next insertion of the key $k$ (if such an insertion occurs in the execution). $op$ returns true only after the deletion of $L$ is linearized. In addition, when $op$ is invoked, $L$ is still reachable in the tree (as $op$ reaches it), so it is impossible that $L$ was removed and $k$ was re-inserted before $op$'s invocation.

In the case of $m$ being \codestyle{getValueIfInserted} which returns $N.value$: 
The linearization point of the insertion of $N$ obviously happens before the linearization point of the deletion of $N$ (if such a deletion occurs in the execution). $op$ returns only after the insertion of $N$ is linearized as it either guarantees its linearization in \Cref{getValueIfInserted: guarantee ts} or reaches $N$ through a traversal which means $N$ has been physically linked to the tree which happens only after its insertion's linearization. 
On the other direction, we show that $op$ is invoked before the deletion of $N$ is linearized: We observe that two effectual operations on the same key could not be announced in \textit{CurrUpdates} at the same time, due to the parent's lock they hold while being announced. Therefore, in case $op$ observes an announcement regarding the insertion of $N$ then at that moment $N$'s deletion could not have yet happened; and in case $op$ reached $N$ through a traversal, then $N$ has been physically linked to the tree beforehand, and certainly after the invocation of $op$ (otherwise $op$ would have reached $N$ during its traversal prior to calling \codestyle{getValueIfInserted}, which it did not---recall $m$ was called after not finding $k$), and at that moment $N$'s insertion was ongoing so a deletion of $N$ could not have been announced yet in \textit{CurrUpdates} so it was not yet linearized.

The last case revolves around $m$ being \codestyle{getValueIfInserted} which returns NOT\_FOUND.
Assuming that an insertion of $k$ has been linearized at any point in the execution by the time $op$ returns, we need to show that there is a moment during $op$'s interval in which the last operation on $k$ to be linearized was a deletion.
Assume, for sake of contradiction, that an insertion of a node $N$ with the key $k$ has been linearized prior to $op$'s invocation, and a deletion of $N$ is not linearized until $op$ returns.
Thus, from the moment $N$ is physically linked to the tree it cannot be physically unlinked during $op$'s interval. $N$ must not yet be linked when $op$ obtains $L$, otherwise $op$ would reach $N$ that has the searched-for key $k$ instead of reaching $L$ that has another key. This means that at the invocation moment of $op$, $P$ must be locked for inserting $k$. Thus, $N$ must be linked as a sibling of $L$, with a new internal node becoming $P$'s child instead of $L$. From the moment $N$ is linked, it remains in $P$'s subtree because the insertion and deletion operations of the base tree preserve ancestor-descendant relations for non-deleted nodes (so nodes may be added or removed between $P$ and $N$, but $N$ remains $P$'s descendant the whole time). And it must be physically linked prior to $m$'s traversal in \Crefrange{getValueIfInserted: search start}{getValueIfInserted: search end} because before the traversal, when \Cref{getValueIfInserted: found insert start} is executed, the \textit{Update} object announcing $N$'s insertion is already not found in \textit{CurrUpdates} anymore. Hence, $m$'s traversal from $P$ must reach $N$, which is a contradiction to returning NOT\_FOUND.
\end{proof}

\begin{claim}\label{claim: results comply}
Consider a sequential history formed by ordering an execution's operations (with their results) according to their linearization points defined in \Cref{section: linearization points}. 
Then operation results in this history comply with the sequential specification of a dictionary.
\end{claim}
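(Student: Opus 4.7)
The plan is to case-split on operation type and verify, for each operation in the sequential history, that its recorded result matches the one dictated by the sequential dictionary specification applied to the prefix of linearized operations preceding it. I would first note that the ordering defined in \Cref{section: linearization points} is a total order (within a timestamp, effectual operations precede aggregate queries, and ties among effectual operations are broken by the CAS-set moment in \fastup{} or by the queue's physical link in \fastquer{}), so every operation is accounted for exactly once.

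For a successful \ins{}$(k,v)$ or \del{}$(k)$, the core argument is that the lock held from \Cref{update: start base alg} through \Cref{update: unlock} in \Cref{subsection: ins and del} prevents any other effectual operation on the same leaf area from being applied during an interval that contains the linearization point. Hence the physical state of the leaf area at lock acquisition reflects exactly the effect of the already-linearized effectual operations, and the base tree algorithm's correctness then certifies the operation's applicability: a successful \ins{} observes $k$ to be absent, and a successful \del{} observes $k$ present with the returned value. The only delicate point is ruling out a concurrent effectual operation on $k$ being linearized in the window between lock acquisition and the current operation's linearization point; such an operation would need to hold a lock on the same parent, which is impossible.

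For \contains{} and failing \ins{}/\del{} operations, the heavy lifting is already done by \Cref{lemma: well defined lps}: each such operation's linearization point is, by construction, placed inside an interval in which the dictionary state agrees with the returned value (or NOT\_FOUND), so compliance with the specification is immediate. I would enumerate the four cases of \Cref{section: linearization points} and match each with the corresponding clause of the Lemma.

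The nontrivial case, and what I expect to be the main obstacle, is the aggregate query. It suffices to show that every traversal behaves as if executed on the dictionary snapshot at the query's timestamp $ts$, so that \codestyle{computeAnswer} receives the same tuples it would receive from a sequential execution on that snapshot. For each traversal, I would prove (a) the leaf reached equals the leaf that the snapshot dictates given the same \codestyle{shouldDescendRight} decisions, and (b) the final \textit{skippedNodesAggValue} equals $f$ applied to the set of $(key,value)$ pairs in the snapshot with key strictly less than the reached leaf's key. Both reduce to a snapshot invariant: at every versioned read performed during the traversal, the returned value---together with the announcements from \textit{currUpdates} whose effects are plugged in---reflects exactly the effectual operations with timestamps $\leq ts$. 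For \fastup{}, this invariant relies on the helping \cas{}es performed during the query's first pass over \textit{CurrUpdates}, which force every still-timestamp-unset effectual operation either into \textit{currUpdates} or to a timestamp $>ts$, combined with the fact that each version's \textit{update} pointer distinguishes whether a pending operation in \textit{currUpdates} has already installed its effect. For \fastquer{}, the invariant follows from the queue's FIFO enqueue giving a total order on timestamps, together with the helping of preceding updates that each effectual operation performs before installing its own version. Granted the invariant, combining the left-subtree aggregate with \textit{skippedNodesAggValue} via $\oplus_f$ gives the correct \textit{aggValueUpToCurrentKey} by \Cref{property: update for additive} and \Cref{property: update for subtractive}, using the associativity and commutativity of $\oplus_f$ from \Cref{alternative definition: additive aggregate function}.
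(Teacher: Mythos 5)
Your proposal follows essentially the same route as the paper's proof: case-split by operation type, dispose of \contains{} and failing \ins{}/\del{} via \Cref{lemma: well defined lps}, argue successful \ins{}/\del{} correctness by lock-induced exclusion so the physical and abstract states agree at the linearization point, and establish that an aggregate query observes a snapshot at its timestamp by combining versioned reads with the gathered \textit{currUpdates}. The snapshot invariant you state is the paper's argument, spelled out there as a list of observations about \textit{currUpdates} (completeness of the gathered set, subtree membership of the gathered operations' leaves via out-of-range elimination and the \textit{done} flag, no double-counting of already-applied updates, and that effectual operations leave \textit{CurrUpdates} only after fully applying themselves).
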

\begin{proof}
The abstract state of the dictionary includes all keys on which the last operation was a successful \ins{}, with the value it inserted. We will show that all operations consider this as the state of the dictionary in their linearization point and return a suitable result.

Starting with \contains{} and failing \ins{} and \del{} operations, from their linearization point definition it immediately follows that their result complies with a dictionary's sequential specification:
\begin{enumerate}
    \item The linearization point of both \contains{}($k$) that returns a node's value after calling \codestyle{isDeleted} and a failing \ins{}($k$, $v$) is defined to be after the linearization point of the insertion of the found key and before its deletion, namely, in a moment $k$ is in the dictionary with the found value.
    \item The linearization point of \contains{}($k$) that returns NOT\_FOUND after calling \codestyle{isDeleted} is after the linearization point of the deletion of the node found with $k$ and before the linearization point of the next insertion of the key $k$, namely, in a moment $k$ is not found in the dictionary with the found value.
    \item The linearization point of \contains{}($k$) that returns a node's value after calling \codestyle{getValueIfInserted} is after the linearization point of the insertion of the newly-inserted node whose value is returned and before the linearization point of its deletion, namely, in a moment $k$ is in the dictionary with the found value.
    \item The linearization point of both \contains{}($k$) that returns NOT\_FOUND after calling \codestyle{getValueIfInserted} and a failing \del{}($k$) is defined to be when $k$ is not found in the dictionary.
\end{enumerate}

For a successful \del{}($k$), in its linearization point, a node $L$ with $k$ is physically found in the tree, and the deletion operation has permanently locked $L$'s parent. So there are no other operations on $k$ currently announced in \textit{CurrUpdates} (because they need to grab the lock first), which means any operation on $k$ that has already linearized has physically applied itself to the tree, and any operation on $k$ that is not yet linearized has not physically applied itself to the tree. Therefore, as $k$ is physically in the tree it means that the last effectual operation on $k$ was an insertion of $L$.

For a successful \ins{}($k$, $v$), in its linearization point, a node with $k$ is not physically found in the tree, and the insertion operation is holding a lock of the edge where the key should be inserted. So there are no other operations on $k$ currently announced in \textit{CurrUpdates} (because they need to grab the lock in the same location first), which means any operation on $k$ that has already linearized has physically applied itself to the tree, and any operation on $k$ that is not yet linearized has not physically applied itself to the tree. Therefore, as $k$ is physically not in the tree it means that the last effectual operation on $k$ was a deletion.
 
It remains to prove that an aggregate query $agg$ returns the correct result. For that we need to show that it obtains the logical values of the child pointers and the aggregate metadata at the moment of its linearization.
It obtains the value from the relevant tree node together with the effect of relevant effectual operations in \textit{currUpdates}.
Let $N$ be the current node whose child pointer or aggregate value $agg$ obtains during its traversal.
$agg$ plugs in only the effect of effectual operations which precede it and are not yet reflected in the current field value in $N$ thanks to the following observations regarding $agg$'s \textit{currUpdates} local variable at the current moment:
First, all effectual operations in \textit{currUpdates} are linearized before $agg$. That is because as detailed in \Cref{template modification: gather} in \Cref{subsection: aggregate queries}, $agg$ gathers ongoing announcements regarding effectual operations with $ts\leq agg.ts$, which are the ones that are linearized before it.
Second, the leaves of all effectual operations in \textit{currUpdates} are in $N$'s subtree. That is thanks to eliminating out-of-range effectual operations throughout the traversal as detailed in \Cref{template modification: eliminate out-of-range} in \Cref{subsection: aggregate queries}, and additionally thanks to ignoring effectual operations with \textit{done==true}---so that in case a deleted node's parent was unlinked and was not taken into account while narrowing the range of \textit{currUpdates}, the deleted node will not be mistakenly taken into account when calculating aggregate metadata of later nodes in the traversal.
Third, if an operation that is still in \textit{currUpdates} already updated a certain affected metadata or a child pointer, it will not be updated again on behalf of the same operation (thanks to a unique timestamp for effectual operations in \fastquer{}, and to the examination of the \textit{update} pointer field in \fastup{}).
We also observe that there are no ongoing effectual operations with $ts\leq agg.ts$ that $agg$ does not gather, thanks to $agg$ guaranteeing that no effectual operations other than the gathered ones will later obtain a timestamp $\leq agg.ts$ as mentioned in \Cref{template modification: gather} in \Cref{subsection: aggregate queries}.
Lastly, there is no effectual operation preceding $agg$ that is not in its \textit{currUpdates} and has not yet updated the tree (both the affected metadata values and the target area), because effectual operations leave \textit{CurrUpdates} only after updating all relevant fields in versions with their timestamp.

\end{proof}

\section{Time complexity}\label{appendix: complexity}
Using the notations brought in \Cref{subsection: complexity}, we next analyze the time complexity of each of our two algorithms.

\subsection{\fastup{} complexity}\label{subsection: fastup complexity}

In \fastup{}, searching \textit{CurrUpdates} in the \codestyle{isDeleted} and \codestyle{getValueIfInserted} methods costs $O(t)$ time. 
However, this cost could be eliminated
by setting aside some bits in each internal node's left-child lock as well as right-child lock (the locks that are placed together in the same memory word) for indicating the tid of the locking thread and modifying the locking mechanism of the base algorithm accordingly (to also set the tid by the same \cas{} that acquires a lock). Then \codestyle{isDeleted} and \codestyle{getValueIfInserted} could avoid searching \textit{CurrUpdates} and instead directly check the cell of the specified locking thread. 
Then we get zero additional asymptotic time cost on \contains{} and failing \ins{} and \del{}.

As for effectual operations, \fastup{} incurs $O(\textit{effectualDepth}\cdot\textit{concUpdates})$ additional time on them: They pay $O(1)$ time for announcing and unannouncing themselves (\Cref{update: announce,update: unannounce} in \Cref{subsection: ins and del}) as well as for applying the operation (\Cref{update: apply,update: finalize del}). To update affected aggregate values (\Cref{update: aggValues}), they go through $O(\textit{effectualDepth})$\footnote{An effectual operation goes through $O(\textit{effectualDepth})$ nodes during the traversal in \Cref{update: aggValues: traverse}, because after it reaches a leaf with key $k$ and locks the edge from its parent in case of \ins{} or permanently locks the parent in case of \del{} (in \Cref{update: start base alg}), searching for $k$ again from the root (in \Cref{update: aggValues: traverse}) may result in traversing through less nodes due to concurrent deletions, but not through additional nodes (which are impossible in the current operation's path because internal nodes are inserted only between a leaf and its parent, and the edge from this path's penultimate node to the leaf may not be modified during the operation's execution as it either the edge's lock is acquired or the penultimate node is permanently locked).} nodes, and for each, perform \Cref{update: aggValues: traverse: update} that costs $O(1)$ time and \Cref{update: aggValues: traverse: choose next} whose versioned read costs $O(\textit{concUpdates})$ time. (\Cref{update: aggValues: gather currUpdates,update: aggValues: traverse: eliminate from currUps} in \Cref{subsection: ins and del} are void in \fastup{} where effectual operations do not help other effectual operations and do not use \textit{currUpdates}.)
The \fastup{} algorithm could however be optimized to avoid any additional asymptotic time cost on effectual operations, by keeping record of the traversed nodes during \Cref{update: start base alg}, so that the recorded nodes will be traversed in \Cref{update: aggValues: traverse} and the versioned reads in \Cref{update: aggValues: traverse: choose next} will be avoided.

An aggregate query in \fastup{} takes $O(\textit{queryDepth}\cdot t\cdot\textit{concUpdates})$ time, according to the following calculation.
We assume all \codestyle{shouldDescendRight} and \codestyle{computeAnswer} methods defined by the query take $O(1)$ time, otherwise their cost should be counted as well.
Obtaining a timestamp for the query takes $O(1)$ time. Each traversal ran by the query goes through \textit{CurrUpdates} in its initialization (in \Cref{template modification: gather} in \Cref{subsection: aggregate queries}) in $O(t)$ time. The traversal then performs $O(\textit{queryDepth})$ iterations, where each iteration costs as follows, based on the traversals' template modifications detailed in \Cref{subsection: aggregate queries}:
Going through \textit{currUpdates} in \Cref{template modification: eliminate out-of-range} takes $O(t)$ time. As for \Cref{template modification: obtain versions}---obtaining child nodes takes $O(t)$ time to go through \textit{currUpdates} (to look for an ongoing update of the child pointer, which could be spared if indicating the locking thread's tid in internal nodes' locks as suggested above, but shaving the $t$ factor here anyhow does not improve the time cost of an aggregate query as this is not its dominant factor) and additional $O(\textit{concUpdates})$ time to perform a versioned read of the pointer, and computing the aggregate value of the obtained left child takes $O(t\cdot\textit{concUpdates})$ time because for each thread the query performs a versioned read on its cell in the metadata array (and also checks the thread's cell in \textit{concUpdates} in $O(1)$ time).

\subsection{\fastquer{} complexity}\label{subsection: fastquer complexity}

In \fastquer{}, searching the \textit{CurrUpdates} queue in the \codestyle{isDeleted} and \codestyle{getValueIfInserted} methods costs $O(\textit{concUpdatingThreads})$ time. In fact, each traversal of \textit{CurrUpdates} by any operation takes $O(\textit{concUpdatingThreads})$ time because each thread may have a single ongoing effectual operation at any moment, and as the queue is always traversed from the head up to a pre-obtained timestamp, only operations that were ongoing when the timestamp was obtained may be traversed. This means that \fastquer{} incurs $O(\textit{concUpdatingThreads})$ additional time on \contains{}. As for \ins{} and \del{}, they might call \codestyle{isDeleted} and \codestyle{getValueIfInserted} multiple times in \Cref{update: start base alg} in \Cref{subsection: ins and del}, as they may restart this step multiple times. However, it is easy to eliminate the additional cost from all calls except the first one by having the first call record the \textit{Update} object it found in \textit{CurrUpdates} (or record it did not find an object with the requested key),
and the following calls could avoid the search and just use what it recorded instead. This way, \fastquer{} will incur $O(\textit{concUpdatingThreads})$ additional time on failing \ins{} and \del{} as well.

For an effectual operation, other than the $O(\textit{concUpdatingThreads})$ additional time due to possibly calling \codestyle{isDeleted} and \codestyle{getValueIfInserted}, it pays the following additional time.
Announcing and unannouncing itself (\Cref{update: announce,update: unannounce} in \Cref{subsection: ins and del}) by enqueueing and removing respectively from the \textit{CurrUpdates} queue based on the wait-free queue of \cite{kogan2011wait} takes $O(t)$ time, which could be optimized to $O(\textit{concUpdatingThreads})$ using techniques of \cite{afek1995wait} as mentioned by \cite{kogan2011wait}.
Applying the operation (\Cref{update: apply,update: finalize del}) costs $O(1)$ time. To update affected aggregate values (\Cref{update: aggValues}), it first goes over \textit{CurrUpdates} to gather announcements into \textit{currUpdates} in $O(\textit{concUpdatingThreads})$ time (\Cref{update: aggValues: gather currUpdates}), then
goes through $O(\textit{effectualDepth})$ nodes (for the same argument that was made above for \fastup{}), and for each node: It performs \Cref{update: aggValues: traverse: update,update: aggValues: traverse: eliminate from currUps} in $O(\textit{concUpdatingThreads})$ time due to going over \textit{CurrUpdates}. In addition, it runs \Cref{update: aggValues: traverse: choose next} in $O(\textit{concUpdates})$ time due to the versioned read, but this step could be eliminated together with its cost, by keeping record of the traversed nodes during \Cref{update: start base alg}, so that the recorded nodes will be traversed in \Cref{update: aggValues: traverse}.

An aggregate query in \fastquer{} takes $O(\textit{queryDepth}\cdot\textit{concUpdates})$ time, according to the following calculation (assuming \codestyle{shouldDescendRight} and \codestyle{computeAnswer} methods defined by the query take $O(1)$ time).
Obtaining a timestamp for the query takes $O(1)$ time. 
Each traversal ran by the query goes through \textit{CurrUpdates} in its initialization (in \Cref{template modification: gather} in \Cref{subsection: aggregate queries}) in $O(\textit{concUpdatingThreads})$ time. The traversal then performs $O(\textit{queryDepth})$ iterations, where each iteration costs $O(\textit{concUpdates})$ because this is the cost of \Cref{template modification: eliminate out-of-range,template modification: obtain versions} in the traversals' template modifications detailed in \Cref{subsection: aggregate queries}, due to going through \textit{currUpdates} in $O(\textit{concUpdatingThreads})$ time which is bounded by $O(\textit{concUpdates})$, and performing versioned reads in $O(\textit{concUpdates})$ time.

In \Cref{section: optimizations} we suggest further optimizations that reduce the \textit{concUpdates} factor in the time complexity of aggregate queries down to $\min\{\textit{concUpdates},\textit{concQueries}\}$.

\section{Algorithm optimizations}\label{section: optimizations}

In addition to the optimizations mentioned in \Cref{appendix: complexity}, 
we suggest several more ways to optimize our algorithms.

Aggregate queries perform versioned reads of both child pointers and aggregate metadata, and each such read costs $O(\textit{concUpdates})$ time.
To reduce the \textit{concUpdates} factor in aggregate queries time complexity down to $\min\{\textit{concUpdates},\textit{concQueries}\}$, we may apply the following optimizations.

In \fastup{}, instead of writing to a versioned field by adding a new version to the head of its linked list of versions, we may update the current version in case of a write with the same timestamp, to spare the creation of redundant versions that no aggregate query is going to need.
The problem that might be then created is that aggregate queries that run concurrently with the effectual operation might not know whether the operation is already reflected in the current tree data or not and they should plug in its effect on their own.
For left and right child pointers, aggregate queries may simply solve the problem by taking into account an announced effectual operation's new edge target if it has an edge source they are currently examining, since that is for sure the correct value.
But for the aggregate metadata field, we need to make some changes to enable an aggregate query to correctly determine if an announced operation has already updated the examined metadata or not: We add an \textit{updateNum} array to the tree, with a cell per thread holding the last effectual operation id of the thread (a field read and written only by the thread, incremented by the thread on each effectual operation it runs); \textit{updateNum} and \textit{valueToSet} fields to the \textit{Update} object; and \textit{updateNum} and \textit{lastUpdateNum} fields to the \textit{VersionedField} object of the aggregate metadata (while having its value field contain only the aggregate value and not an update pointer as in the initial design, which solves also the problem of keep holding a pointer to the \textit{Update} object after it is no longer necessary, thus delaying its reclamation).
We have an effectual operation update the metadata as follows (where \textit{update} is its \textit{Update} object, \textit{version} is the \textit{VersionedField} object of the aggregate metadata it currently updates, and \textit{newValue} is the new value to be written to the version---reflecting the current value combined with the effect of the current effectual operation):%
\begin{lstlisting}
update.valueToSet = newValue
version.updateNum = update.updateNum
version.value = update.valueToSet 
version.lastUpdateNum = version.updateNum
\end{lstlisting}
We have aggregate queries run the following to obtain the correct metadata value (where \textit{update} is the \textit{Update} object obtained from its \textit{currUpdates} as it has the same key as the node whose aggregate value is calculated and \textit{version} is the \textit{VersionedField} object of the node for which the aggregate metadata is calculated):
\begin{lstlisting}
versionValue = version.value
if version.updateNum < update.updateNum:
    return < versionValue with the effect of update plugged in > 
updateValue = update.valToSet 
if version.lastUpdateNum < update.updateNum: return updateValue 
return version.value
\end{lstlisting}

In \fastquer{}, we suggest to apply two changes in order to be able to avoid creating versions not required by any ongoing aggregate query, thus reducing both the time complexity of aggregate queries and the tree's space consumption. 
First, aggregate queries will announce themselves (specifically, just their timestamp) in a global \textit{QueryTimestamps} wait-free queue, similar to the \textit{CurrUpdates} queue holding the effectual operations' announcements.
An effectual update will gather all announced query timestamps smaller than its own timestamp (similarly to the way ongoing effectual operations' announcements are gathered from \textit{CurrUpdates}). While updating aggregate values throughout its update traversal (on behalf of both ongoing effectual operations with smaller timestamp and itself), it will not create a version for each effectual operation, but rather only one version for each batch of effectual operations with timestamps $leq$ that of an ongoing aggregate query. %
This change will add $O(\textit{concUpdatingQueries})$ %
to the complexity of the aggregate query, where \textit{concUpdatingQueries} denotes the number of threads running aggregate queries whose execution interval overlaps with that of the analyzed operation (and is bounded by both \textit{concQueries} and $t$).
The second change is writing in existing versions instead of in new ones, even though this is a write with a newer timestamp, as long as no ongoing aggregate query has timestamp $\geq$ the existing one and $<$ the new one (according to the announced query timestamps), because there is no need to keep versions with timestamps that no queries will look at. This could be done using double-width \cas{} or a \cas{} of an object with the two fields - the value and the timestamp, in order for aggregate queries to obtain the value associated with the correct timestamp. %

Additionally, to make aggregate queries run faster (though without effect on their asymptotic time complexity), they may obtain left and right pointers throughout their traversals using versioned reads only without plugging in the effect of effectual operations in \textit{currUpdates}, which will save searching \textit{currUpdates}. This is fine for all nodes except for at the target area, where current effectual operations must be taken into account because they might have not yet linked a node that the query should reach, or have not yet unlinked a node it should not reach.%
Thus, an aggregate query should plug in the effect of operations in \textit{currUpdates} to the target area, either by retroactively going back up to the leaf's grandparent after reaching the leaf and re-calculating the child pointers, or by checking in advance on each node during the traversal whether the next-next child is a leaf and if so search \textit{currUpdates} for relevant operations.

\bibliographystyle{plainurl}
\bibliography{refs}

\end{document}